\begin{document}
\newcommand{\ABV}{\ensuremath{AB1V} }
\newcommand{\ABVp}{\ensuremath{AB1V}}
\newcommand{\NP}{\ensuremath{\mathcal{NP}}}
\newcommand{\E}{\ensuremath{\mathcal{E}}}
\def\O(#1){\ensuremath{\mathcal{O}(#1)}}

\title{On Fan-Crossing Graphs}
\titlerunning{On Fan-Crossing Graphs}

\author{Franz J.\ Brandenburg}
\institute{94030 Passau, Germany \\
\email{brandenb@informatik.uni-passau.de}}

\authorrunning{F.\ J.\ Brandenburg}

\maketitle

\begin{abstract}
A \emph{fan} is a set of edges with a single common endpoint. A
graph is \emph{fan-crossing} if it admits a drawing in the plane so
that each edge is crossed by edges of a fan.
 It is \emph{fan-planar}
if, in addition, the common endpoint is on the same side of the
crossed edge. A graph is \emph{adjacency-crossing} if it admits a
drawing so that crossing edges are adjacent. Then it excludes
\emph{independent crossings} which are crossings by edges with no
common endpoint. Adjacency-crossing allows \emph{triangle-crossings}
in which an edge crosses the edges of a triangle, which is excluded
at fan-crossing graphs.

We  show that every  adjacency-crossing graph is fan-crossing. Thus
triangle-crossings can be avoided. On the other hand, there are
fan-crossing graphs that are not fan-planar, whereas for every
fan-crossing graph there is a fan-planar graph on the same set of
vertices and with the same number of edges. Hence, fan-crossing and
fan-planar graphs are different, but they do not differ in their
density with at most $5n-10$ edges for graphs of size $n$.
\end{abstract}



\section{Introduction} \label{sect:intro}

Graphs with  or without special patterns  for edge crossings  are an
important topic in Topological Graph Theory, Graph Drawing, and
Computational Geometry. Particular patterns
  are no crossings, single crossings, fans, independent edges, or no three
  pairwise crossing edges.
  A \emph{fan} is a set of edges with a single common endpoint.
  In complement, edges are \emph{independent} if they do not share a common endpoint.
Important graph classes have been defined in this way, including
 the planar, 1-planar
\cite{klm-bib-17, ringel-65}, fan-planar \cite{bddmpst-fan-15,
bcghk-rfpg-17,ku-dfang-14}, fan-crossing free
  \cite{cpkk-fan-15}, and quasi-planar  graphs \cite{aapps-qpg-97}.
  A first order logic definition of these and other graph classes is given in \cite{b-FOL-17}.
These definitions are motivated by the need for classes of
non-planar graphs from real world applications, and a negative
correlation between   edge crossings   and the readability of graph
drawings by human users. The aforementioned graph classes aim to
meet both requirements.

We consider undirected graphs $G = (V,E)$ with   finite sets of
vertices $V$ and edges $E$ that are \emph{simple} both in a graph
theoretic and in a topological sense. Thus we do not admit multiple
edges and self-loops, and we exclude multiple crossings of two edges
and crossings among adjacent edges.

A \emph{drawing}   of a graph $G$ is a mapping of $G$ into the plane
so that the vertices are mapped to distinct points and each edge is
mapped to a Jordan arc between the endpoints. Two edges \emph{cross}
if their Jordan arcs intersect in a point other than an endpoint.
Crossings subdivide an edge into uncrossed pieces, called \emph{edge
segments}, whose endpoints are vertices or crossing points. An edge
is \emph{uncrossed} if and only if it consists of a single edge
segment.
A drawn graph is called a \emph{topological graph}. In other works,
a topological graph is called an \emph{embedding}  which is the
class of topologically equivalent drawings.
An embedding defines a \emph{rotation system} which is the cyclic
sequence of edges incident to each vertex. A drawn graph partitions
the plane into topologically connected regions, called \emph{faces}.
The unbounded region is called the \emph{outer face}. The
\emph{boundary} of each face consists of  a cyclic sequence of edge
segments. It is commonly specified by the sequence of vertices and
crossing points of the edge segments.
%
 The subgraph
of a graph $G$ induced by a subset $U$ of vertices is denoted by
$G[U]$. It inherits its embedding from an embedding of $G$, from
which all vertices not in $U$ and all edges with at most one
endpoint in $U$ are removed.

\begin{figure}
  \centering
  \subfigure[ ]{
    \parbox[b]{2.8cm}{%
      \centering
      \includegraphics[scale=0.3]{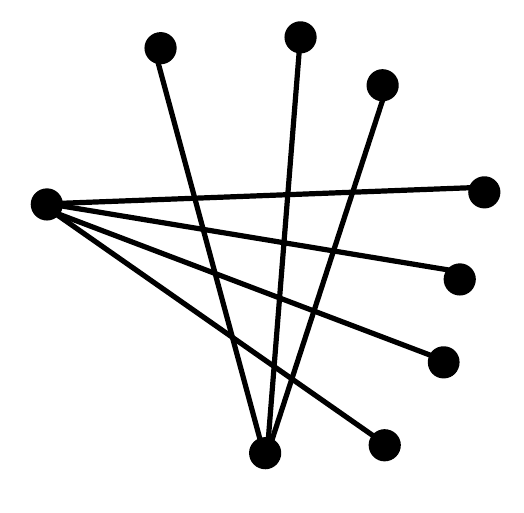}
    }
    \label{fig:simplefan}
  }
  \hfil
  \subfigure[ ]{
    \parbox[b]{2.8cm}{%
      \centering
      \includegraphics[scale=0.3]{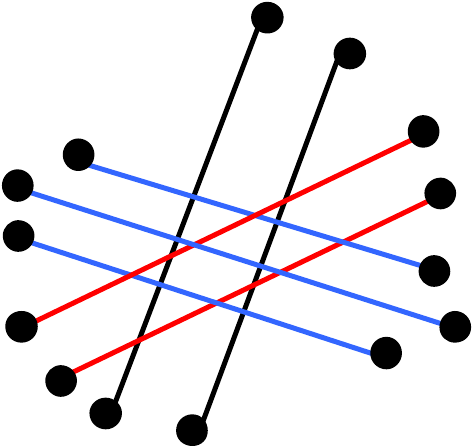}
    }
    \label{fig:fcf}
  }
  \caption{(a) A fan-crossing and
  (b)  an independent crossing or fan-crossing free }
  \label{fig:typeedgecrossings}
\end{figure}

An edge $e$ has a \emph{fan-crossing} if the crossing edges form a
fan, as in Fig.~\ref{fig:simplefan}, and an \emph{independent
crossing} if the crossing edges are independent, see
Fig.~\ref{fig:fcf}. Fan-crossings are also known as radial $(k,1)$
grid crossings   and independent crossings as grid crossings
 \cite{afps-grids-14}.
Independent crossings are excluded if and
only if \emph{adjacency-crossings} are allowed in which two edges
are adjacent if they both cross an edge \cite{b-FOL-17}.

 \emph{Fan-planar} graphs were introduced by Kaufmann and
Ueckerdt \cite{ku-dfang-14}, who imposed a special restriction,
called \emph{configuration II}. It is shown in Fig.~\ref{fig:conf2}.
Let $e,f$ and $g$ be three edges in a drawing so that $e$ is crossed
by $f$ and $g$, and $f$ and $g$ share a common vertex $t$. Then they
form   configuration II  if one endpoint of $e$ is inside a cycle
through $t$ with segments of $e, f$ and $g$, and the other endpoint
of $e$ is outside this cycle.  If $e = \{u,v\}$ is oriented from $u$
(left) to $v$ (right) and $f$ and $g$ are oriented away from $t$,
then $f$ and $g$ cross $e$ from different directions. Configuration
II admits \emph{triangle-crossings} in which an edge crosses the
edges of a triangle, see Fig.~\ref{fig:conf2triangle}. Observe that
a triangle-crossing is the only configuration in which an edge is
crossed by edges that do not form a fan and that are not
independent.

\begin{figure}
  \centering
  \subfigure[ ]{
    \parbox[b]{4.5cm}{%
      \centering
      \includegraphics[scale=0.6]{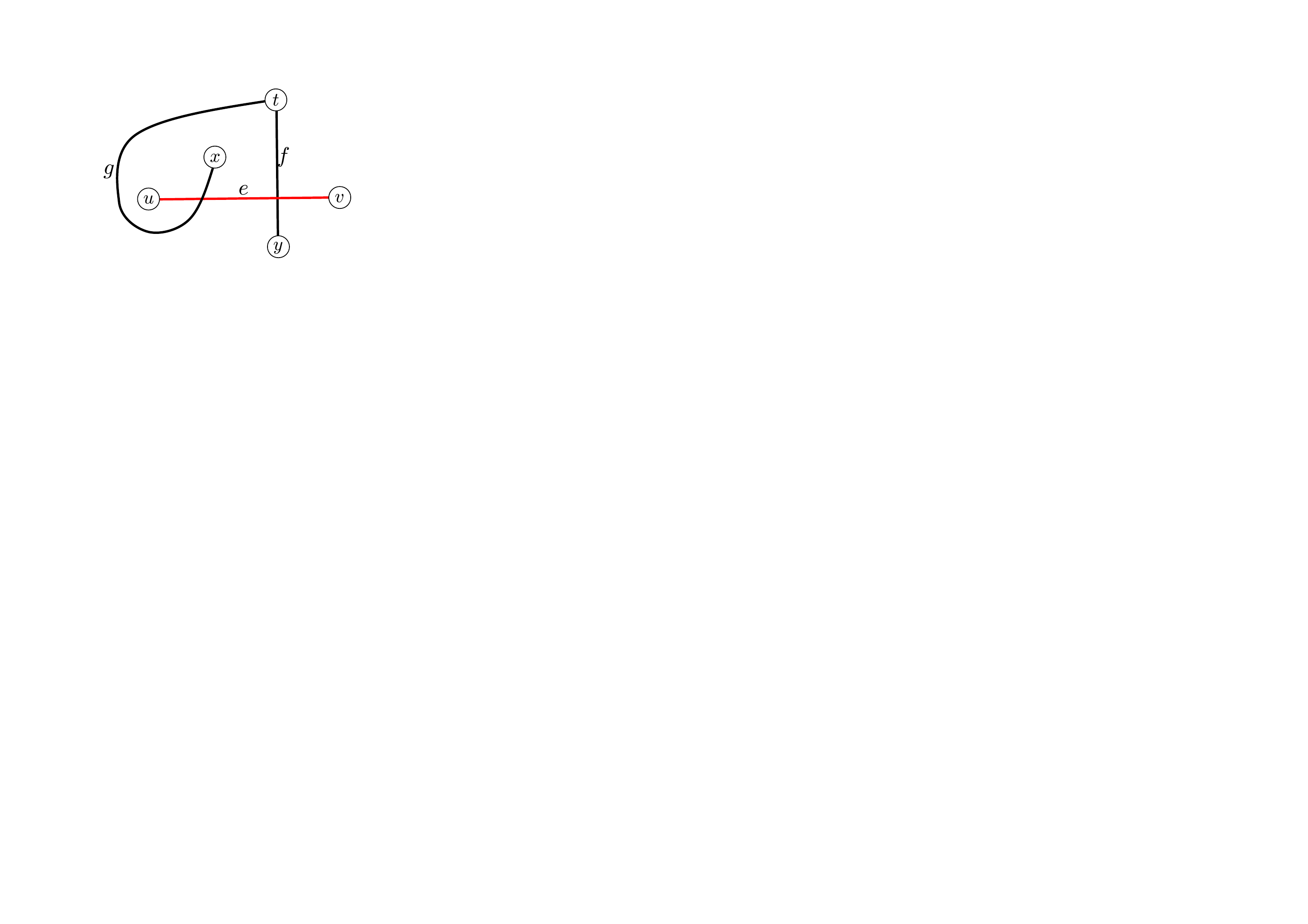}
    }
    \label{fig:conf2}
  }
  \hfil
  \subfigure[ ]{
    \parbox[b]{4.5cm}{%
      \centering
      \includegraphics[scale=0.6]{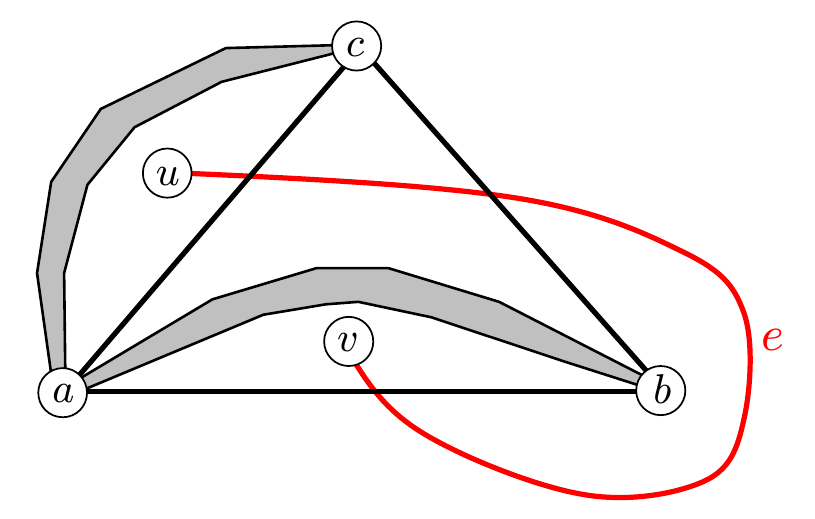}
    }
    \label{fig:conf2triangle}
  }
 \caption{(a) Configuration II in which edge $e=\{u,v\}$ is crossed by edges $\{t,x\}$
    and $\{t,y\}$ and $x$ and $y$ are on opposite sides of $e$ and (b)  edge $e= \{u,v\}$
    crosses
    a triangle. The shaded regions represent subgraphs which shall prohibit another
    routing of $e$. Similar regions could be added to (a), as in Fig.~\ref{fig:graphM}.}
  \label{fig:fanconfiguration}
\end{figure}

A graph   is \emph{fan-crossing free} if it admits a drawing without
fan-crossings \cite{cpkk-fan-15}. Then there are only independent
crossings. A graph is \emph{fan-crossing} if it admits a drawing in
which each crossing is a
  fan-crossing, and  \emph{adjacency-crossing} if it can be
drawn so that each edge is crossed by edges that are adjacent. Then
 independent crossings are excluded. As stated in \cite{b-FOL-17}, adjacency
crossing is complementary to independent crossing, but the graph
classes are not complementary and both properly include the 1-planar
graphs. A graph is \emph{fan-planar} if it avoids independent
crossings and configuration II \cite{ku-dfang-14}.

Observe the subtle differences between adjacency-crossing,
fan-crossing, and fan-planar graphs, which each exclude independent
crossings, and in addition exclude triangle-crossings and
configuration II, respectively.
 Kaufmann and Ueckerdt \cite{ku-dfang-14} observed that   configuration II
cannot occur in straight-line drawings, so that every straight-line
adjacency-crossing drawing is fan-planar.
They proved that fan-planar graphs of size $n$ have at most $5n-10$
edges and posed the density of adjacency-crossing graphs as an open
problem. The \emph{density} defines an upper bound an on the number
of edges in graphs of size $n$.
 We show that triangle-crossings can be avoided by an edge
rerouting, and that configuration II can be  restricted to a special
case.
 Moreover, the allowance
or exclusion of configuration II   has no impact on the density,
which answers the  above question. In particular, we   prove the
following:

\begin{enumerate}
  \item Every adjacency-crossing graph is fan-crossing. Thus
triangle-crossings can be avoided.
  \item There are fan-crossings graphs that are not fan-planar.
Thus configuration II is essential.

  \item For every    fan-crossing  graph $G$
there is a fan-planar graph $G'$ on the same set of vertices and
with (at least) the same number of edges. Thus fan-crossing graphs
of size $n$ have at most $5n-10$ edges.
\end{enumerate}

We prove that triangle-crossings can be avoided by an  edge
rerouting in   Section \ref{sect:trianglecrossings}  study
configuration II in Section \ref{sect:fanplanar}. We conclude in
Section \ref{sect:conclusion} with some open problems on
fan-crossing graphs.

\section{Triangle-Crossings} \label{sect:trianglecrossings}

In this section, all embeddings $\mathcal{E}(G)$   are
adjacency-crossing or equivalently they exclude independent
crossings. We consider triangle-crossings and show that they can be
avoided by an edge rerouting. A \emph{rerouted edge} is denoted by
$\tilde{e}$ if $e$ is the original one. More formally, we transform
an adjacency-crossing embedding $\mathcal{E}(G)$ into an
adjacency-crossing embedding $\tilde{\mathcal{E}}(G)$ which differs
from $\mathcal{E}(G)$ in the embedding of the rerouted edges such
that $\tilde{e}$ does not cross a particular triangle   if $e$
crosses that triangle.

For convenience,  we assume that triangle-crossings are in a
\emph{standard configuration},
 in which a triangle $\Delta = (a,b,c)$ is crossed by
edges $e_1,\ldots, e_k$ for some $k \geq 1$ that cross each edge of
$\Delta$. We call each $e_i$ a \emph{triangle-crossing edge} of
$\Delta$. These edges are incident to a common vertex $u$ if $k \geq
2$. We assume that a triangle-crossing edge $e=\{u,v\}$ crosses
$\{a,c\}, \{b,c\}$ and $ \{a,b\}$ in this order and that $u$ is
outside $\Delta$. Then $v$ must be inside $\Delta$. All other cases
are similar exchanging inside and outside and the order in which the
edges of $\Delta$ are crossed.

We need some further notation. Let $fan(v)$ denote a subset of edges
incident to vertex $v$ that cross a particular edge. This is a
generic definition. If the crossed edge is given, then $fan(v)$ can
be retrieved from the embedding $\mathcal{E}(G)$. In general,
$fan(v)$ does not contain all edges incident to $v$. A \emph{sector}
is a subsequence of edges of $fan(v)$ properly between two edges
$\{v, s\}$ and $\{v,t\}$ in clockwise order. An edge  $e$ is
\emph{covered} by  a vertex $v$ if $e$ is crossed by at least two
edges incident to $v$ so that $fan(v)$ has at least two elements.
Let $cover(v)$ denote the set of edges covered by $v$. Note that
uncrossed edges and edges that are crossed only once are not
covered. If an edge $e$ is crossed by an edge $g= \{u,v\}$, then $e$
is a candidate for $cover(u)$ or $cover(v)$ and $e \not\in cover(w)$
for any other vertex $w \neq u,v$ except if $e$ crosses a triangle.
In fact, an edge $e=\{u,v\}$ is triangle-crossing   if and only if
$\{e\} = cover(x) \cap cover(y)$ for vertices $x \neq y$. To see
this, observe that $e \in cover(x)$ for $x = a,b,c$ if $e$ crosses a
triangle $\Delta = (a,b,c)$. Conversely, if $e$ is crossed by edges
$\{a, w_1\}, \{a, w_2\}$ and $\{b, w_3\}$ with $a \neq b$ and $w_1
\neq w_2$, then $w_1 =   w_3$ and $w_2=b$ (up to renaming) if there
are no independent crossings.

Triangle crossings are special. If  an edge $e$ crosses a triangle
$\Delta$, then $e$ cannot be crossed by any   edge other than the
edges of $\Delta$. In particular, $e$ cannot cross another triangle
or another triangle-crossing edge. But an  edge may be part of two
triangle-crossings,  as a common edge of two crossed triangles, as
shown in Fig.~\ref{fig:twotriangles}, or as   a triangle-crossing
edge of one triangle and an edge of another triangle, as shown in
Fig.~\ref{fig:tri2}, and both configurations can be combined.

\begin{figure}[t]
  \centering
  \subfigure[ ]{
    \parbox[b]{4cm}{%
      \centering
      \includegraphics[scale=0.6]{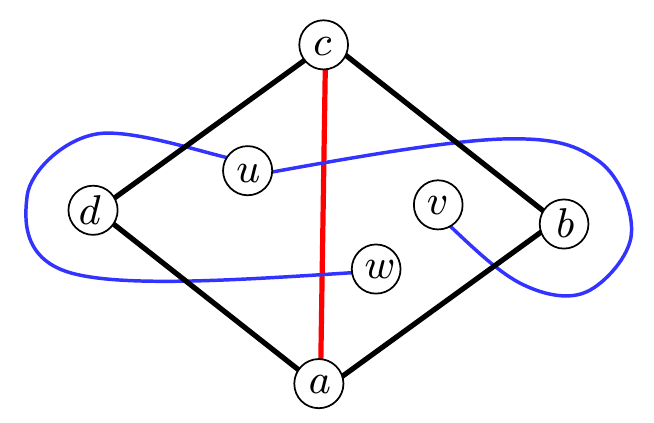}
    }
    \label{fig:twotriangles}
  }
  \hfil
  \subfigure[ ]{
    \parbox[b]{4cm}{%
      \centering
      \includegraphics[scale=0.6]{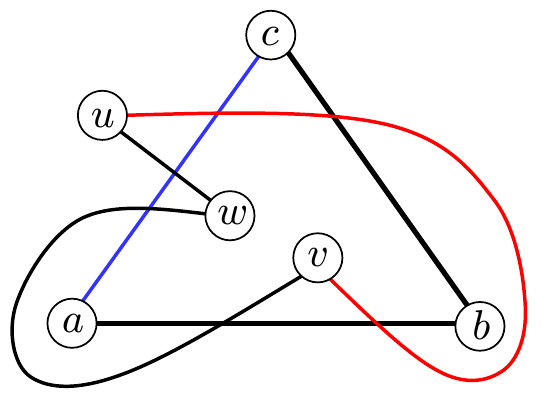}
    }
    \label{fig:tri2}
  }
 \caption{Two crossed triangles sharing  (a)  an edge
 or (b) an edge and a triangle-crossing edge.}
  \label{fig:triangleconfigurations}
\end{figure}

A particular example is $K_5$, which has five embeddings
\cite{hm-dcgmnc-92}, see Fig.~\ref{fig:allK5}. The one of
Fig.~\ref{fig:allK5}(e)   has a triangle-crossing. If it is a part
of an adjacency-crossing embedding, then we show that it can be
transformed into the embedding of Fig.~\ref{fig:allK5}(c) by
rerouting an edge of the crossed triangle.

\begin{figure}[h]
   \centering
   \subfigure[ ]{
     \includegraphics[scale=0.35]{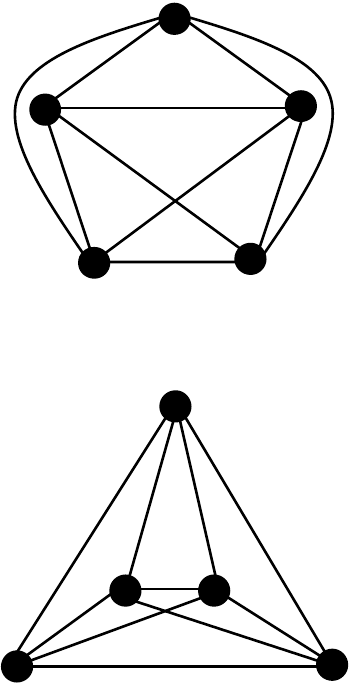}
   }
   \quad
      \subfigure[ ]{
      \rotatebox{1}{%
     \includegraphics[scale=0.35]{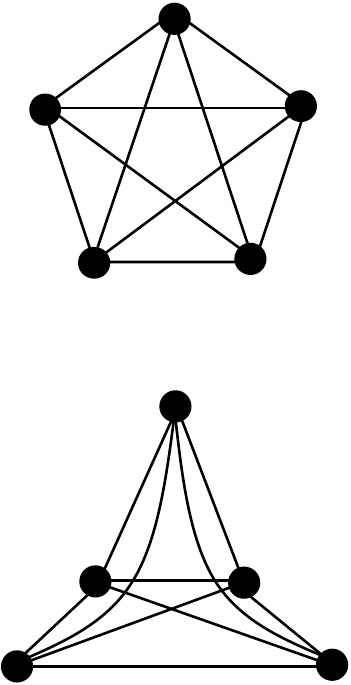}
   }
   }
   \quad
      \subfigure[ ]{
     \includegraphics[scale=0.35]{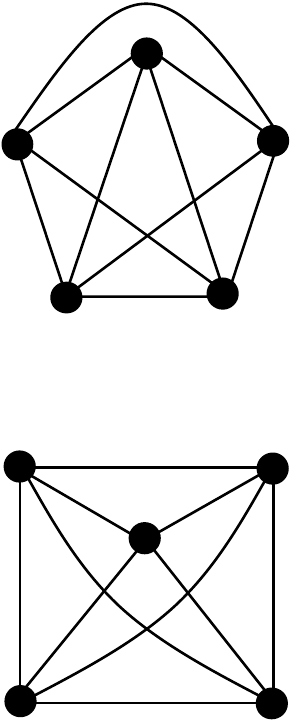}
   }
   \quad
      \subfigure[ ]{
     \includegraphics[scale=0.35]{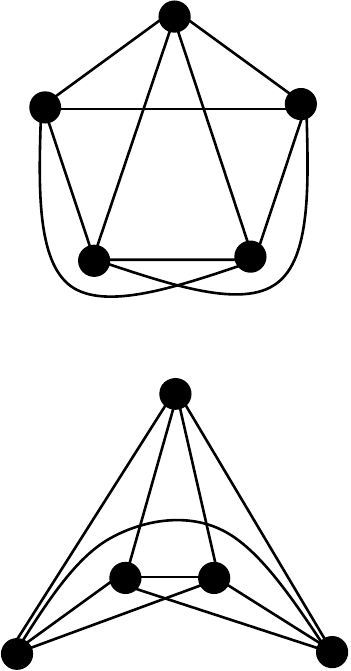}
   }
\quad
   \subfigure[ ]{
     \includegraphics[scale=0.35]{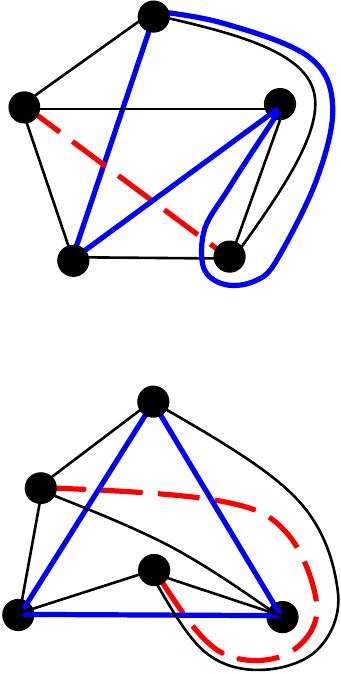}
   }
   \caption{All non-isomorphic embeddings of $K_5$ \cite{hm-dcgmnc-92} with two drawings. Only (a) is
   1-planar and fan-crossing free, (b), (c), and (d) are fan-planar
   and
 (e)  is  adjacency-crossing and has a triangle crossing with the
 triangle-crossing edge drawn red.
 Our rerouting transforms (e) into (c) and reroutes and straightens the curved edge.}
   \label{fig:allK5}
\end{figure}

In return, the edges of $\Delta$ can only be crossed by edges of
$fan(u)$ or $fan(v)$ if $e= \{u,v\}$ is a triangle-crossing edge of
$\Delta$. They are covered by $u$ if there are at least two
triangle-crossing edges incident to $u$.
 In addition, there may be edges that   cross only one or
two edges of   $\Delta$. These are incident to $u$ or $v$  and they
are incident to $u$ if there are at least two triangle-crossing
edges  incident to $u$. We assume a standard configuration and
classify crossing   edges   by the sequence of crossed edges, as
stated in Table~$1$.

\begin{table}[h]
\begin{tabular}{|l | c| l |}
  \hline
  name & set & sequence of crossed edges \\
  \hline
  needle & $N_1, N_2, N_3$ & $\{a,c\}$ \\
  $a$-hook & $H_a$&   $\{a,b\}$ \\
  $c$-hook & $H_c$  & $\{b,c\}$ \\
  $a$-arrow  & $A_a$ & $\{a,c\}, \{a,b\}$ \\
  $c$-arrow & $A_c$ & $\{a,c\}, \{b,c\}$  \\
  $a$-sickle & $S_a$ & $\{a,b\}, \{a,c\}$ \\
  $c$-sickle & $S_c$ & $\{b,c\}, \{a,c\}$ \\
  clockwise triangle-crossing & $C$ & $\{a,c\}, \{b,c\}, \{a,b\}$  \\
  counterclockwise triangle-crossing & $CC$ & $\{a,c\}, \{a,b\}, \{b,c\}$  \\
   \hline
\end{tabular}
 \label{tab:classify}
\caption{Classification of edges crossing the edges of a triangle
$\Delta = (a,b,c)$}
\end{table}

Suppose that $u$ is outside $\Delta$. Then the other endpoint of $g=
\{u,w\}$ is inside $\Delta$ if $g$ is a needle, a hook, or a
triangle-crossing edge, and $w$ is outside $\Delta$ if $g$ is an
arrow or a sickle, see Fig.~\ref{fig:triangle2dir}. An $a$-arrow and
an $a$-sickle are covered by  $a$, since they are crossed by at
least two edges of $fan(a)$. Similarly, a $c$-arrow and a $c$-sickle
are covered by $c$. A needle $g$ may be covered by $a$ or by $c$ and
there is a preference for $a$ ($c$) if $g$ is before (after) any
 triangle-crossing
edge  according to the order of crossing points on $\{a,c\}$ from
$a$ to $c$. Otherwise, there is an instance of configuration II, as
shown in Fig.~\ref{fig:badneedle}. Accordingly, an $a$-hook
   may be covered by $a$ or by $b$   and the
crossing edges are on or inside $\Delta$ if it is covered by $b$,
since the triangle-crossing edges prevent edges from $b$ outside
$\Delta$ that cross $a$-hooks.

By symmetry, we consider needles, hooks, arrows, and sickles from
the viewpoint of vertex $v$ inside $\Delta$. Then a needle first
crosses $\{a,b\}$ and an $a$-hook first crosses $\{a,c\}$ and the
other endpoint is outside $\Delta$.

\begin{figure}[H]
  \centering
  \subfigure[ ]{
    \parbox[b]{4.5cm}{%
      \centering
      \includegraphics[scale=0.7]{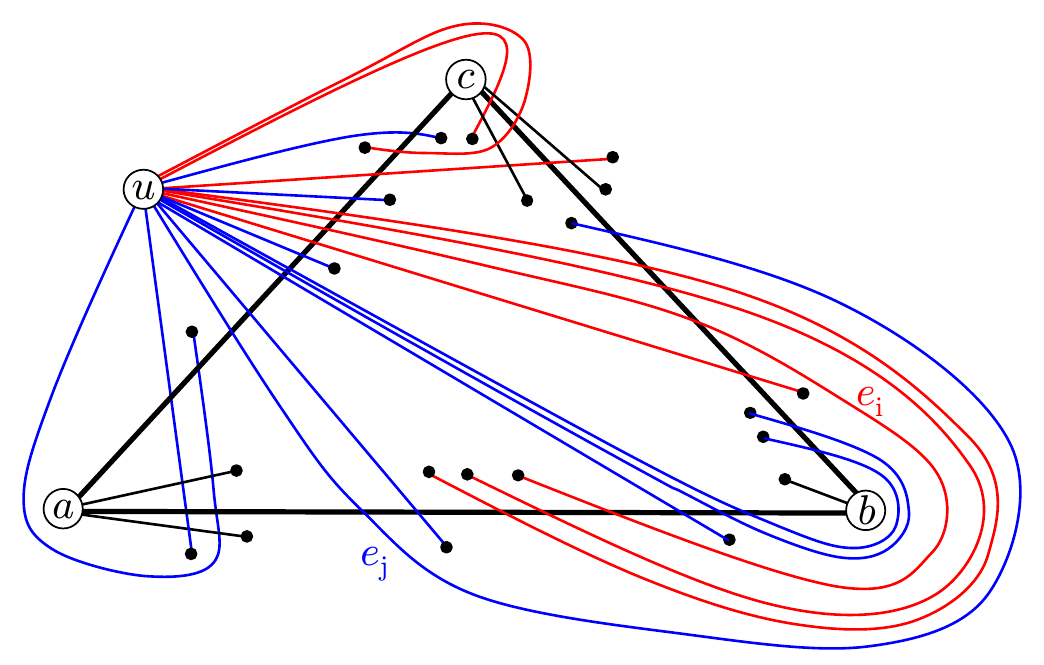}
    }
    \label{fig:triangle2dir}
  }
\\
  \subfigure[ ]{
    \parbox[b]{4.5cm}{%
      \centering
      \includegraphics[scale=0.7]{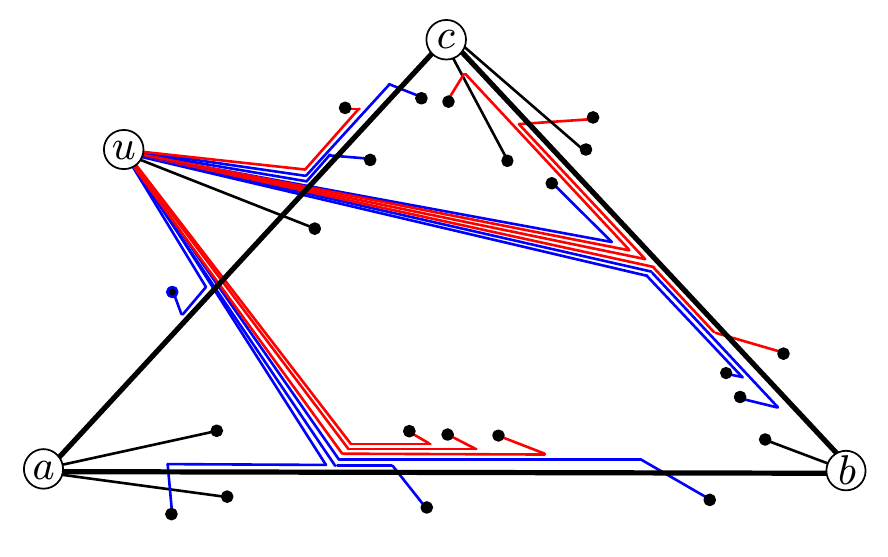}
    }
    \label{fig:triangle2dirRR}
  }
 \caption{Triangle-crossings (a) with clockwise triangle-crossing
 edges, $c$-hooks, $c$-sickles, and $c$-arrows crossing $\{b,c\}$ drawn red and
 counterclockwise triangle crossing edges, $a$-arrows, $a$-hooks and $a$-sickles crossing $\{a,b\}$, drawn blue and
 (b) rerouting the edges along $e_i$ and $e_j$}
  \label{fig:notrianglemultiCC}
\end{figure}

A triangle $\Delta = (a,b,c)$ can be crossed by several
triangle-crossing edges, even in opposite directions, see
Fig.~\ref{fig:triangle2dir}. We say that a triangle-crossing edge
crosses \emph{clockwise}   if it crosses $\{a,c\}, \{b,c\}, \{a,b\}$
cyclicly in this order, and \emph{counterclockwise}   if it crosses
the edges in the cyclic order $\{a,c\}, \{a,b\}, \{b,c\}$.

\begin{lemma} \label{lem:triangle2dir}
Let $\mathcal{E}(G)$ be an adjacency-crossing embedding of a graph
$G$ such that  a triangle $\Delta$ is crossed by triangle-crossing
edges
 in clockwise and in counterclockwise order. Then there is an
adjacency-crossing embedding in which each triangle-crossing edge is
rerouted so that it  crosses only one edge of $\Delta$,  and no new
triangle-crossings are introduced.
\end{lemma}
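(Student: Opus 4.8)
The plan is to reroute each triangle-crossing edge so that it stays on one side of the triangle, using the edges that cross $\Delta$ in the opposite direction as ``guides.'' Concretely, suppose $\Delta = (a,b,c)$ is crossed clockwise by an edge $e_i = \{u, v_i\}$ (crossing $\{a,c\}, \{b,c\}, \{a,b\}$ in this order, with $u$ outside $\Delta$ and $v_i$ inside) and counterclockwise by an edge $e_j = \{u', v_j\}$ (crossing $\{a,c\}, \{a,b\}, \{b,c\}$). I first record two structural facts coming from the absence of independent crossings. A clockwise edge and a counterclockwise edge must share their outside endpoint, so $u = u'$ and all triangle-crossing edges emanate from the single vertex $u$ (this is already essentially stated in the excerpt, via ``These edges are incident to a common vertex $u$ if $k\ge 2$''). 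The second fact is that a clockwise edge and a counterclockwise edge cannot cross each other inside $\Delta$: they share the endpoint $u$, so a crossing between them would be a crossing of adjacent edges, which is forbidden.

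**Next I would describe the rerouting itself.** Walk along $e_i$ from $u$ to $v_i$. The edge $e_i$ enters $\Delta$ through $\{a,c\}$ and leaves through $\{a,b\}$, with its portion inside $\Delta$ separating vertex $a$ from the opposite side $\{b,c\}$. The plan is to replace the middle portion of $e_i$ by a curve that follows $e_i$ just outside $\Delta$ (on the side of $a$), slips around vertex $a$, and then re-enters $\Delta$ hugging the portion of $e_i$ that is inside $\Delta$, until it reaches $v_i$; the new curve $\tilde e_i$ thus crosses only $\{a,c\}$ (or, by pushing the detour past $a$ to the other pair of edges, only $\{a,b\}$), and similarly a counterclockwise edge $e_j$ is rerouted to cross only one of $\{a,c\}$ or $\{b,c\}$, detouring around $a$ as well — see Fig.~\ref{fig:triangle2dirRR}. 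The point of routing \emph{all} the triangle-crossing edges around the \emph{same} vertex $a$ is that, since they all share $u$, they can be drawn as a nested family that pairwise does not cross, and none of them crosses $\{b,c\}$ afterward, so no new crossed triangle can appear.

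**The verification splits into three checks.** (i) No new crossings with edges of $\Delta$: by construction $\tilde e_i$ crosses exactly one edge of $\Delta$. (ii) No new crossings among the rerouted edges: the rerouted parts of all triangle-crossing edges lie in a thin neighborhood of their old curves together with a common detour region near $a$; since the originals were pairwise noncrossing (adjacent at $u$), and the detour region can be drawn as concentric arcs around $a$ respecting the rotation at $u$, the $\tilde e_i$ remain pairwise noncrossing. (iii) No new crossings with other edges and no new triangle-crossings: here I use that a triangle-crossing edge of $\Delta$ is crossed only by the edges of $\Delta$ (stated in the excerpt: ``$e$ cannot be crossed by any edge other than the edges of $\Delta$''), so the only edges in the old neighborhood of $e_i$ that we must worry about are edges of $\Delta$ and other triangle-crossing edges, both already handled; and since no $\tilde e_i$ crosses $\{b,c\}$, the triple $\{a,c\},\{b,c\},\{a,b\}$ is no longer crossed by a common edge, and one checks no \emph{other} triangle is newly crossed because $\tilde e_i$'s crossing set is a subset of $e_i$'s. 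Independent crossings are not reintroduced because every crossing on $\tilde e_i$ is with an edge of $\Delta$, which is adjacent (at $a$, or at $b$ or $c$) — more precisely one must order the detour so the crossed edge of $\Delta$ still has $u$ on the correct side to preserve adjacency-crossing; this is exactly the kind of local bookkeeping the ``preference for $a$'' discussion in the excerpt sets up.

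**The main obstacle** I expect is check (ii) combined with the interaction between the two crossing directions: one has to route clockwise edges and counterclockwise edges around $a$ \emph{simultaneously} without them crossing each other, and without any of them crossing any edge of $fan(a)$, $fan(b)$, or $fan(c)$ that was already present for some \emph{other} crossed triangle sharing an edge with $\Delta$ (the configurations in Fig.~\ref{fig:triangleconfigurations}). Getting the cyclic order right at $u$ — so that the nested detour arcs near $a$ can be drawn consistently with the rotation system while the two families approach $a$ from the two sides dictated by their crossing direction — is the delicate point, and the proof will need to argue that the region swept near $a$ is free of obstructing edge segments, using once more that triangle-crossing edges are crossed only by edges of their own triangle.
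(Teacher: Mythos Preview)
Your proposal has a genuine gap at exactly the point you flag as the ``main obstacle.'' The rerouting ``slips around vertex $a$,'' but $a$ is not isolated: it may be incident to arbitrarily many further edges going outside $\Delta$ (the paper later exploits this by attaching copies of $K_7$ as ``fat edges'' at triangle vertices; see the remark in the proof of Lemma~\ref{lem:trianglecovered}). Any Jordan arc that detours around $a$ must cross every such edge, and these crossings are \emph{not} in $e_i$'s old crossing set. So your claim in check~(iii) that ``$\tilde e_i$'s crossing set is a subset of $e_i$'s'' is false. Worse, the embedding can cease to be adjacency-crossing: if some edge $\{a,d\}$ is already crossed by an edge $\{p,q\}$ with $p,q\notin\{u,v_i\}$, then your $\tilde e_i=\{u,v_i\}$ adds a second, independent crossing on $\{a,d\}$. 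The fact that triangle-crossing edges are crossed only by the sides of $\Delta$ tells you nothing about the region near $a$, because that region is far from the old curve of $e_i$. (There is also a geometric slip in your picture: a clockwise $e_i$ enters $\Delta$ through $\{a,c\}$, \emph{exits} through $\{b,c\}$, and \emph{re-enters} through $\{a,b\}$; its first interior arc cuts off $c$, not $a$.)

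The paper avoids the vertices of $\Delta$ entirely. It uses the triangle-crossing edges themselves as routing channels: since such an edge is crossed only by the three sides of $\Delta$, the portion of a chosen clockwise $e_i$ from $u$ to $\{a,c\}$, and from $\{a,c\}$ to $\{b,c\}$, is an uncrossed corridor. The proof then reroutes \emph{all} edges of $fan(u)$ that meet $\Delta$ --- not just the triangle-crossing ones, but also the needles, hooks, arrows, and sickles of Table~1 --- in a single bundle along a chosen clockwise $e_i$ and a chosen counterclockwise $e_j$, after sorting them into a specific cyclic order at $u$ so that the bundle is internally non-crossing. Each rerouted edge follows the corridor, then a side of $\Delta$, and finally its own original tail; the corridor guarantees no new crossings are picked up, and the only nontrivial case (hooks, whose crossed side of $\Delta$ changes) is checked separately. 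This is the missing idea: route along a triangle-crossing edge, not around a triangle vertex.
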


\begin{proof}
Suppose that the edges of   $\Delta = (a,b,c)$ are crossed by the
edges of a set $X$. If there are at least two triangle-crossing
edges, then there is a vertex $u$ so that $X=fan(u)$. By our
assumption,  $u$ is outside $\Delta$ and $\{a,c\}$ is crossed first.
All other cases are similar. Classify the edges according to
Table~$1$. Choose a clockwise triangle-crossing edge $e_i$ and a
counterclockwise triangle-crossing edge   $e_j$,  and assume that
$e_i$ precedes $e_j$ in clockwise order at $u$. The other case is
similar. Partition the set of needles so that
 $N_1, N_2$ and $N_3$ are the sets of needles before
  $e_i$, between $e_i$ and $e_j$, and after $e_j$ in
clockwise order at $u$. Then $N_3 < e_j < N_2 < e_i < N_1$ according
to the order (of the crossing points) on $\{a,c\}$. Accordingly,
partition the set of counterclockwise triangle-crossing edges into
$CC_l$ and $CC_r$, where $CC_l$ comprises the   edges before $e_i$
and $CC_r = CC-CC_l$ is the set of edges after $e_i$, and partition
the set $C$ into the sets of edges to the left and right of $e_j$.

Then edges of $\Delta$ are crossed by the edges of  $X=N_1 \cup N_2
\cup N_3 \cup H_a \cup H_c \cup A_a \cup A_c \cup L_a \cup L_c\cup C
\cup CC$. Some of these sets may be empty. The edges from these sets
are unordered at $u$. In particular, edges of $C$ and $CC$ may
alternate, needles may appear anywhere, whereas $c$-hooks and
$c$-sickles precede triangle-crossing edges which precede $a$-hooks
and $a$-sickles.

We sort the edges of $X$ in clockwise order at $u$ and reroute them
along $e_i$ and $e_j$ in the following order:

\noindent $S_c < N_1 <    CC_l < H_c < A_c < CC_r < N_2 < C_r < A_a
< H_a < C_l < N_3 < S_a$.

Two edges in a set are ordered by the crossing points with edges of
$\Delta$ so that adjacent edges do not cross one another. The edges
of $S_c$ and $N_1$ are routed along $e_i$ from $u$ to the crossing
point of $e_i$ and $\{a,c\}$, where they make a left turn and follow
$\{a,c\}$. Then the rerouted edge $\tilde{g}$ follows the original
$g$ so that $\tilde{g}$ crosses $\{a,c\}$ if $g$ is a needle. An
edge $\tilde{g}$ first follows $e_i$ to the crossing point with
$\{b,c\}$ if $g \in H_c \cup CC_l \cup A_c \cup CC_r$, then it
follows $\{b,c\}$ and finally $g$. If $g \in H_c \cup CC_l$, then
$\tilde{g}$ makes a left turn and   a right turn for edges in
$CC_r$. Accordingly, edges $\tilde{g}$ make a left or right turn and
cross $\{b,c\}$ if $g$ is an arrow. An edge $\tilde{g}$ may follow
$e_i$ or $e_j$ from $u$ to $\{a,c\}$ or adopts the route of $g$ if
$g \in N_2$ is a needle between the chosen triangle-crossing edges
$e_i$ and $e_j$. Similarly, edges of $C_r, A_a, C_l, N_3$ and $S_a$
are routed along $e_j$ from $u$ to the crossing point with $\{a,b\}$
and $\{a,c\}$, respectively, then along one of these edges, and
finally along the original edge.  For an illustration see
Fig.~\ref{fig:notrianglemultiCC}.

The rerouting saves many crossings. Only arrows cross two edges of
$\Delta$, and needles, hooks and triangle-crossing edges cross
$\{a,c\}$. In fact, each rerouted edge is crossed by a subset of
edges crossing the original one, except if the edge is a hook. This
is due to the fact that triangle-crossing edges are only crossed by
the edges of the triangle. Hence, there are (uncrossed) segments
from $u$ to $\{a,c\}$ and from $\{a,c\}$ to $\{b,c\}$ and $\{a,b\}$,
respectively.
  In the
final part, $\tilde{g}$ coincides with $g$ and adopts the edge
crossings from $g$. In consequence, $\tilde{g}$   crosses only
$\{a,c\}$ if $g$ is a triangle-crossing edge.
 If $g$ is a $c$-hook, then
the crossing with edge $\{b,c\}$  is replaced by a crossing with
$\{a,c\}$ and crossings with edges of $fan(c)$ outside $\Delta$ are
avoided. The replacement is feasible.  A $c$-hook cannot be covered
by $b$, since a further crossing edge $\{b,d\}$ must cross a
clockwise triangle-crossing edge, which is excluded. Hence,
$\tilde{g}$ is crossed by edges of $fan(c)$, and each edge $h$
crossing $\tilde{g}$ is in $fan(u)$. Similarly, edge $\{a,b\}$ can
be replaced by $\{a,c\}$ at $a$-hooks. The other rerouted edges
adopt the crossings from the final part, so that   new
triangle-crossings cannot be introduced. Topological simplicity is
preserved, since the bundle of edges is well-ordered, and two edges
cross at most once, since there are segments from $u$ to $\{a,c\}$
and between $\{a,c\}$ and $\{b,c\}$ and $\{a,b\}$, respectively.

In consequence,  triangle-crossings of   $\Delta$ are avoided, there
are no new triangle-crossings, and the obtained embedding is
adjacency-crossing.
\qed
\end{proof}

The rerouting technique of Lemma \ref{lem:triangle2dir} widely
changes the order of the edges of $fan(u)$ and it avoids many
crossings.   It is possible to restrict the rerouting to
triangle-crossing edges so that they cross only a single edge of the
triangle. Therefore consider two consecutive crossing points of
clockwise triangle crossing edges or $c$-arrows and $\{b,c\}$, and
reroute the counterclockwise crossing edges crossing $\{b,c\}$ in
the sector along one of the bounding edges. Accordingly, proceed
with clockwise triangle-crossing edges and sectors of $\{a,b\}$.
Thereby hooks, sickles and arrows remain unchanged.
 \\

From now on, we   assume that all triangle-crossing edges   cross
 clockwise. We wish to reroute them along
 an $a$-arrow, $a$-hook or $a$-sickle if such an edge exists.
 This is doable, but we must take a detour if
 the edge is covered by $b$ or $c$.

\begin{lemma} \label{lem:trianglearrow}
Suppose there is an adjacency crossing embedding $\mathcal{E}(G)$
and a triangle $\Delta$   is crossed by clockwise triangle-crossing
edges. If there are an
 $a$-hook, an $a$-arrow or an $a$-sickle, then some edges are
 rerouted so that
 $\tilde{g}$ crosses only one edge of $\Delta$ if $g$ is a triangle-crossing edge of $\Delta$,
 and there are no new triangle-crossings.
\end{lemma}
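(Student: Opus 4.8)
The plan is to use the $a$-arrow, $a$-hook or $a$-sickle $h$ as a guide rail along which the clockwise triangle-crossing edges of $\Delta$ are rerouted, in the same way that the counterclockwise edge $e_j$ was used in the proof of Lemma~\ref{lem:triangle2dir}. By the standard configuration each triangle-crossing edge $e_i=\{u,v_i\}$ leaves $u$ (which lies outside $\Delta$), crosses $\{a,c\}$, then $\{b,c\}$, then $\{a,b\}$, and ends at $v_i$ inside $\Delta$; and whenever there are at least two triangle-crossing edges, all needles, hooks, arrows and sickles of $\Delta$ are incident to $u$ as well, so that all edges crossing $\Delta$ are pairwise adjacent. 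I would treat the case that $h$ is an $a$-arrow as representative, and afterwards indicate how the $a$-sickle (which behaves like a reversed arrow) and the $a$-hook (which supplies only the part of the rail near $\{a,b\}$, so that the reroutings have to pass around vertex $a$ outside $\Delta$) are handled analogously.

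First I would prove the structural claim that localises the triangle-crossing edges. Since $h$ is an $a$-arrow it enters $\Delta$ through a point $p_1$ on $\{a,c\}$, cuts off a corner region $R_a$ at $a$, and leaves through a point $p_2$ on $\{a,b\}$. A triangle-crossing edge is crossed only by the edges of its triangle, so no $e_i$ crosses $h$; hence, if some $e_i$ entered $\Delta$ through the portion of $\{a,c\}$ bounding $R_a$, it could leave $R_a$ only through the portion of $\{a,b\}$ bounding $R_a$, and would then be stranded outside $\Delta$ with no remaining crossing of $\{a,c\}$ or $\{a,b\}$ available to reach $\{b,c\}$ in clockwise order --- a contradiction. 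Therefore every $e_i$ crosses $\{a,c\}$ between $p_1$ and $c$ and crosses $\{a,b\}$ between $p_2$ and $b$, and the needles and hooks are treated the same way. As in Lemma~\ref{lem:triangle2dir}, two such edges do not cross one another, so their crossing points occur in one common order along $\{a,c\}$ and along $\{a,b\}$.

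The rerouting then proceeds as follows. For each triangle-crossing edge $e_i$ I keep its initial segment from $u$ and its final segment inside $\Delta$, from a point $s_i$ just past the crossing $r_i$ with $\{a,b\}$ to $v_i$, and replace the middle part by an arc that crosses $\{a,c\}$ at a new point between $p_1$ and the old one, slides just inside $\Delta$ towards $p_1$, hugs the segment $p_1p_2$ of $h$ on the $\Delta\setminus R_a$ side, and hugs $\{a,b\}$ towards $s_i$. Then $\tilde e_i$ meets $\Delta$ only once, on $\{a,c\}$, so it is no longer a triangle-crossing edge. Choosing the new crossing points in the common order above and nesting the rerouted edges parallel to one another --- rerouting the needles and hooks alongside them and keeping each edge on the same edges of $\Delta$ as before --- makes the reroutings pairwise noncrossing and keeps them clear of $h$. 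Since the unchanged final segments of the triangle-crossing edges carry no crossings, and the new portions can be made to run so close to the boundary of $\Delta$ that they meet only edges of $fan(u)$, which are pairwise adjacent, topological simplicity is preserved and neither an independent crossing nor a new triangle-crossing is created.

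The step I expect to be the main obstacle is the detour. The edge $h$ may itself be covered by $b$ or by $c$, i.e.\ be crossed inside $\Delta$ by edges of $fan(b)$ (or $fan(c)$); equivalently $h$ is simultaneously a triangle-crossing edge of a second crossed triangle $\Delta'$ sharing an edge with $\Delta$, the configuration of Fig.~\ref{fig:tri2}. An arc hugging $h$ would then cross such an edge of $fan(b)$, and since that edge is independent of $\{a,c\}$ this would create an independent crossing. The remedy is to let the hugging arc make a bounded detour around $\Delta'$ while staying inside $\Delta$ --- there is room, because the edges of $fan(b)$ crossing $h$ lie on or inside $\Delta$ --- or, equivalently, to perform the rerouting on $\Delta'$ first. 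One then has to check that this detour neither creates an independent crossing nor reintroduces a triangle-crossing, and that the argument still terminates when the two crossed-triangle configurations of Fig.~\ref{fig:triangleconfigurations} occur together; verifying this interaction cleanly is the delicate part.
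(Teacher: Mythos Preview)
Your overall plan---use the $a$-arrow/hook/sickle $h$ as a guide rail and slide the triangle-crossing edges along it---is the same as the paper's. But the justification you give for why the resulting embedding stays adjacency-crossing has a real gap, and your diagnosis of the ``detour'' obstacle is off target.

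The sentence ``the new portions can be made to run so close to the boundary of $\Delta$ that they meet only edges of $fan(u)$'' is not correct. On the stretch where $\tilde e_i$ hugs the segment $p_1p_2$ of $h$, it will be crossed by every edge $\{a,d\}$ that crosses $h$ there; such edges exist whenever some $d$ lies in $\Delta\setminus R_a$, because any arc from $a$ to $d$ must leave $R_a$ through $h$. These crossers lie in $fan(a)$, not in $fan(u)$. That is harmless for $\tilde e_i$ itself (its crossers $\{a,c\}$ and the various $\{a,d\}$ all share $a$), but you now have to check the \emph{other} side: each $\{a,d\}$ has acquired a new crosser $\tilde e_i=\{u,v_i\}$, and you must show that all previous crossers of $\{a,d\}$ are adjacent to $\tilde e_i$. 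Previously $\{a,d\}$ was crossed by $h=\{u,w\}$, so it is covered either by $u$ or by $w$; if it is covered by $w$ then adding $\tilde e_i$ produces an independent crossing. The paper's proof is largely devoted to ruling out the ``covered by $w$'' case, and it does so differently for hooks, sickles and arrows (for the sickle it cannot be ruled out along the obvious rail, which is exactly why the paper takes a detour via $e_1$ and $\{a,c\}$ instead). Your proposal skips this entirely.

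Relatedly, your stated obstacle---``$h$ may itself be covered by $b$ or by $c$''---cannot occur for an $a$-arrow or an $a$-sickle: such an $h$ crosses both $\{a,b\}$ and $\{a,c\}$, hence is necessarily covered by $a$. The genuine obstacle, and the reason the paper treats the three cases separately with different reroutings, is whether the edges of $fan(a)$ that cross $h$ are covered by $u$ or by its other endpoint $w$. Your proposed remedy (a local detour around a second triangle $\Delta'$) addresses a configuration that does not arise here and does not resolve the $w$-covered case.
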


\begin{proof}
Our target is edge $\{a,b\}$ of $\Delta=(a,b,c)$, where the crossing
edges are ordered from $a$ to the left to $b$. Then $a$-hooks and
$a$-sickles are to the left of all triangle-crossing edges, whereas
$a$-arrows are interspersed. Edge $\{a,b\}$ is covered by $u$.

Let $f= \{u,w\}$ be the rightmost edge among all $a$-hooks,
$a$-arrows, and $a$-sickles. First, if $f$ is an $a$-hook, then
reroute all  edges $g$ crossing $\{a,b\}$ to the right of $f$ in a
bundle from $u$ to $\{a,b\}$ along the outside of $f$, see
Fig.~\ref{fig:trianglehook}. Since $f$ is rightmost, edge $g$ is
triangle-crossing. Then $\tilde{g}$ makes a right turn and follows
$\{a,b\}$ and finally it follows $g$. Thereby, $\tilde{g}$ crosses
$\{a,b\}$. Let $F$ be the set of edges
 in the sector between $\{a,b\}$ and
$\{a,c\}$ that cross $f$, i.e., outside $\Delta$. Then $\tilde{g}$
is crossed by the edges of $F$ and also by $\{a,b\}$. Each crossing
edge is in $fan(a)$ and is uncovered or covered by $u$. It cannot be
covered by the other endpoint $w$ of $f$, since $w$ is inside
$\Delta$ and any edge $\{w,w'\}$ crossing an edge $\{a,d\} \in F$
must cross $\{a,b\}, \{a,c\}$ or a triangle-crossing edge, which is
excluded, since it enforces an independent crossing. Thus
$\tilde{g}$ is only crossed by edges  of $fan(a)$, and $\tilde{g}$
can be added to the fan of edges of $fan(u)$  that cross such edges.
Hence, all introduced crossings are fan-crossings, as
Fig.~\ref{fig:trianglehookR} shows.

We would like to proceed accordingly if $f$ is an $a$-sickle and
reroute triangle-crossing edges along the outside of $f$ from $u$ to
$\{a,b\}$. However, $f$ may be crossed by   edges $\{a,d\}$ that are
covered by $w$, as shown in Fig.~\ref{fig:trianglesickle}. Then a
rerouted edge along  $f$ introduces an independent crossing. We take
another path.

Let the $a$-sickle $f= \{u,w\}$   cross  $\{a,b\}$ in $p_1$ and
  $\{a,c\}$ in  $p_2$, see Fig.~\ref{fig:trianglesickle}.
Let $H$ be the set of edges that   cross $\{a,c\}$ between the first
triangle-crossing edge $e_1$ and $f$ including $f$. Now  we reroute
  all edges $h \in H$ and all triangle-crossing edges $g$ so that
they first follow
  $e_1$ from $u$ to $\{a,c\}$, then
$\{a,c\}$, where  the edges $\tilde{h}$ branch off and and follow
$h$. If $g$ is a triangle-crossing edge, then $\tilde{g}$ crosses
$\{a,c\}$ at $p_2$, and then follows $f, \{a,b\}$, and finally $g$,
see Fig.~\ref{fig:trianglesickleR}.

The rerouted edges are uncrossed from $u$ to their crossing point
with $\{a,c\}$. Hence, each edge  $\tilde{h}$ is crossed by a subset
of edges that cross $h$ for $h \in H$. Let $F$ be the set of edges
crossing $f$ in the sector between $p_1$ and $p_2$. Since $f$ is
covered by $a$, these edges are incident to $a$. Now $\tilde{g}$ is
crossed by $\{a,c\}$ and by the edges of $F$ if $g$ is
triangle-crossing, so that $\tilde{g}$ is crossed by edges of
$fan(a)$. Each edge $h \in F$ is in $fan(u)$, since it crosses $f=
\{u,w\}$ and it cannot be covered by $w$. Otherwise, it must be
crossed by another edge $\{w, w'\}$. However, $w$ is outside
$\Delta$ and $\{w, w'\}$ must cross $\{a,c\}$ or $\{a,b\}$ or a
triangle-crossing edge, which introduces an independent crossing.
Hence, $\tilde{g}$ can be added to the fan of edges at $u$ that
cross $h$ so that there is a fan-crossing.

We proceed similarly
 if $f = \{u,w\}$ is an $a$-arrow, see Fig.~\ref{fig:tribadarrowX}.
 Reroute all edges $g$ that cross $\{a,c\}$
 to the right of the leftmost  triangle-crossing edge $e_1$ including $e_1$. Then $g$ is triangle-crossing
 or an $a$-arrow. Route $\tilde{g}$ from $u$ to $\{a,c\}$ along the
 first edge that crosses $\{a,c\}$ and is covered by $c$, then along
 $\{a,c\}$ to the crossing point with $f$, then along $f$ and
 finally along $g$. Then there is a segment from $u$ to the crossing
 with $\{a,c\}$. In the sector between $\{a,c\}$ and $\{a,b\}$, $\tilde{g}$ is
 crossed by the edges of $fan(a)$ that cross $f$ in this sector. If
 $g$ is a triangle-crossing edge, then $\tilde{g}$ is not crossed by
 further edges, whereas $\tilde{g}$  adopts the crossings with further
 edges incident to $a$ outside $\Delta$ if $g$ is an $a$-arrow.

 Now, $\tilde{g}$ is crossed by a subset of edges that cross $g$ if $g$ is an
 $a$-arrow, since $f$ is the rightmost $a$-arrow. If $g$ is a
 triangle-crossing edge, then the edges crossing $\tilde{g}$ are
 incident to $a$, and each crossing edge is incident to $u$. It
 cannot be incident to or covered by the other endpoint $e$ of $f$,
 since $w$ is outside $\Delta$ and the edges crossing $\tilde{g}$
 are inside, and and no further edge $\{w,w'\}$ with $w' \neq u$ can
 cross $\{a,b\}, \{a,c\}$, or a triangle-crossing edge. Hence, there
 is a fan-crossing,   $\tilde{g}$ crosses only one edge of $\Delta$ if
$g$ is triangle-crossing, and there are no new triangle-crossings.
\qed
\end{proof}

\begin{figure}
  \centering
  \subfigure[ ]{
    \parbox[b]{4.5cm}{%
      \centering
      \includegraphics[scale=0.6]{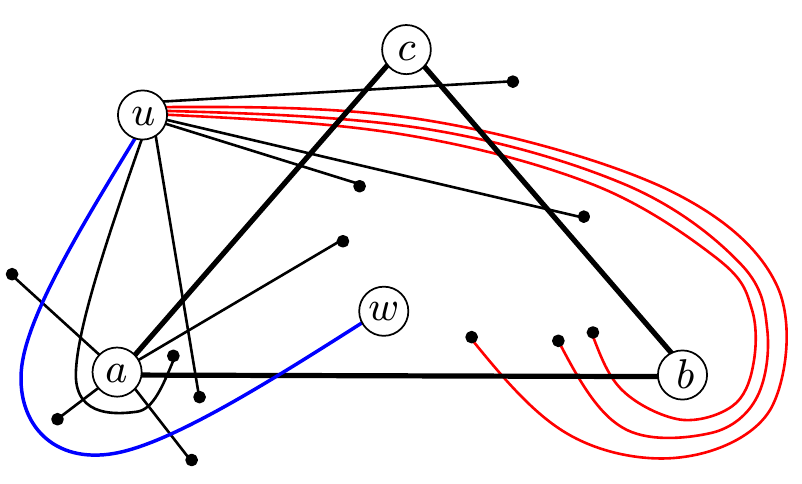}
    }
    \label{fig:trianglehook}
  }
  \hfil
  \subfigure[ ]{
    \parbox[b]{4.5cm}{%
      \centering
      \includegraphics[scale=0.6]{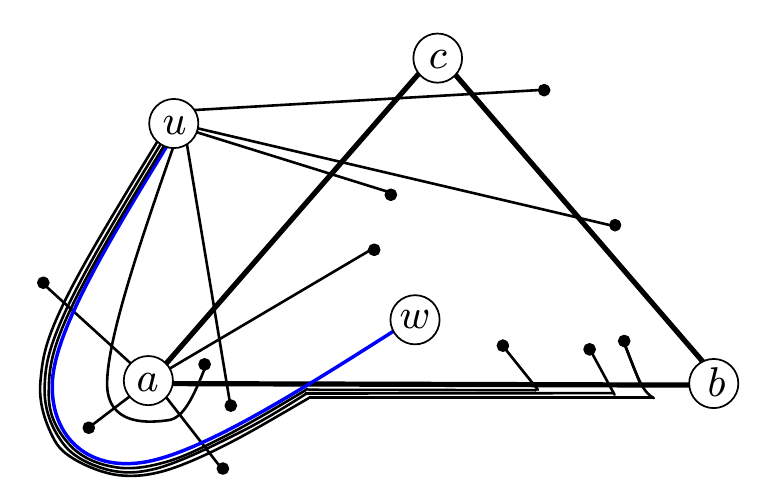}
    }
    \label{fig:trianglehookR}
  }
 \caption{(a) An $a$-hook (drawn blue and dashed) and triangle-crossing edges which (b) are rerouted along the $a$-hook.}
  \label{fig:trihook}
\end{figure}

\begin{figure}
  \centering
  \subfigure[ ]{
    \parbox[b]{4.5cm}{%
      \centering
      \includegraphics[scale=0.6]{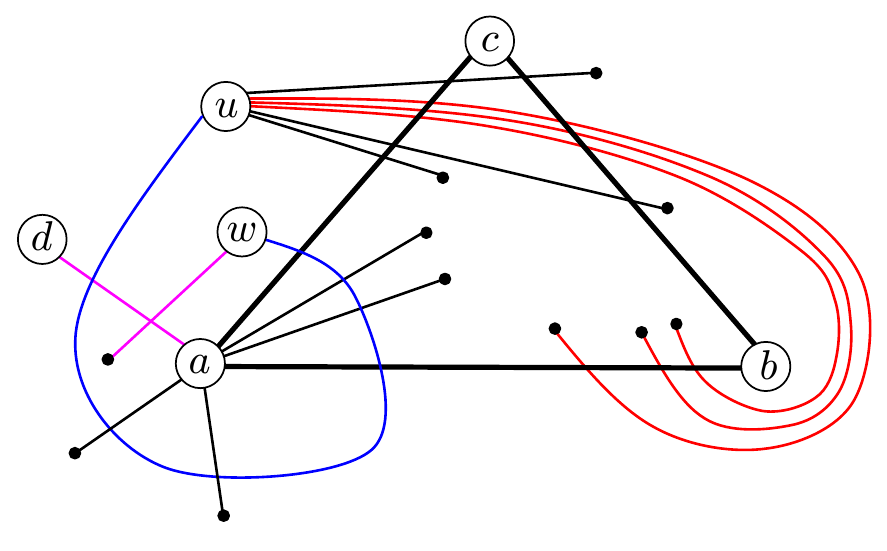}
    }
    \label{fig:trianglesickle}
  }
  \hfil
  \subfigure[ ]{
    \parbox[b]{4.5cm}{%
      \centering
      \includegraphics[scale=0.6]{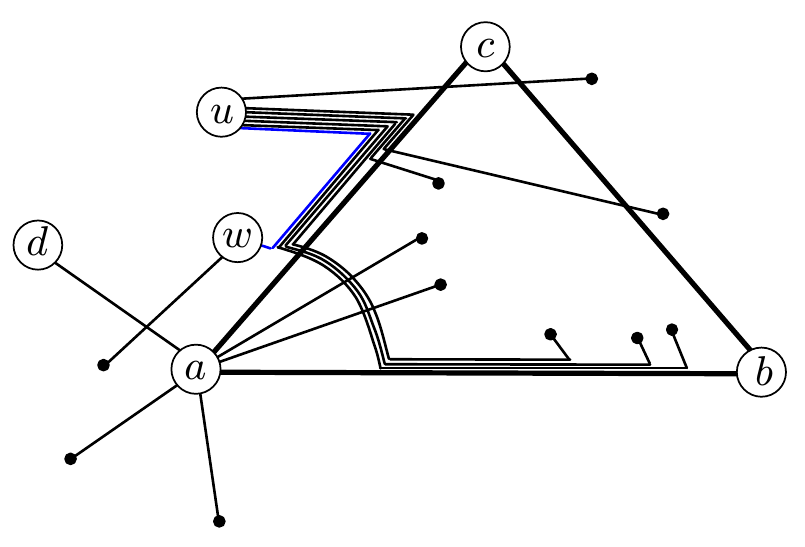}
    }
    \label{fig:trianglesickleR}
  }
 \caption{  An $a$-sickle and triangle-crossing edges (a) before and   (b) after the edge rerouting.}
  \label{fig:sickle}
\end{figure}

\begin{figure}
  \centering
  \subfigure[ ]{
    \parbox[b]{4.5cm}{%
      \centering
      \includegraphics[scale=0.6]{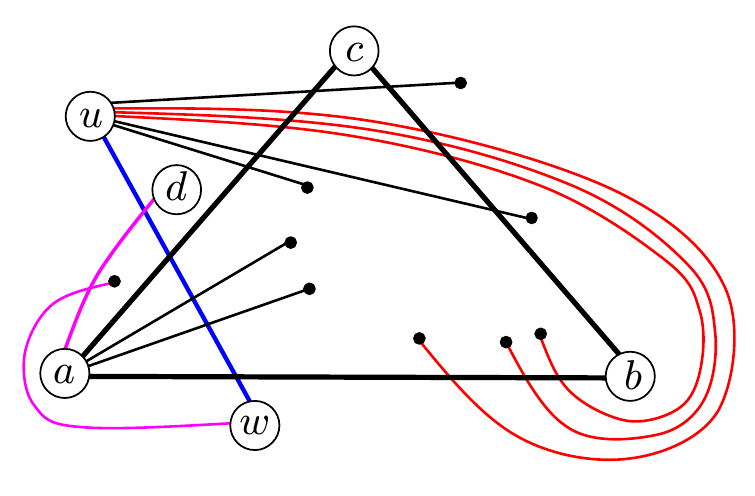}
    }
    \label{fig:triarrow}
  }
  \hfil
  \subfigure[ ]{
    \parbox[b]{4.5cm}{%
      \centering
      \includegraphics[scale=0.6]{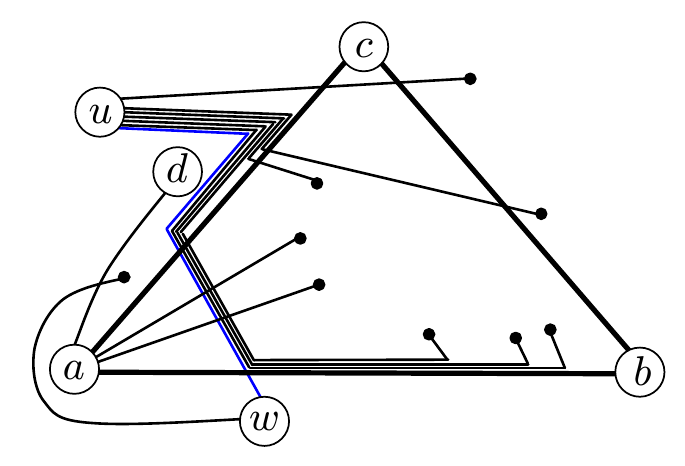}
    }
    \label{fig:triarrowR}
  }
 \caption{An $a$-arrow and triangle-crossing edges (a) before and   (b) after the edge
 rerouting.}
  \label{fig:tribadarrowX}
\end{figure}


The existence of an $a$-hook, $a$-sickle  or $a$-arrow implies that
edge $\{a,b\}$ is covered by $u$. By symmetry,   we can reroute all
triangle-crossing edges, if there are $a$-hooks, $a$-sickles or
$a$-arrows from the viewpoint of vertex $v$ inside $\Delta$. Then
$\{a,c\}$ is covered by $v$. For example, an arrow from $v$ first
crosses $\{a,b\}$ and then $\{b,c\}$ so that vertex $b$ is enclosed
and triangle-crossing edges are rerouted along the outer side of the
arrow. It remains to consider the case without such edges. Then
there are only triangle-crossing edges, needles (from $u$ and from
$v$),   $c$-hooks, $c$-arrows, and $c$-sickles.

\begin{lemma} \label{lem:trianglecovered}
Suppose there is an adjacency crossing embedding $\mathcal{E}(G)$
and a triangle $\Delta = (a,b,c)$   is crossed by  clockwise
triangle-crossing edges. If there are no
 $a$-hooks,   $a$-arrows and  $a$-sickles
and edges $\{a,c\}$ and $\{b,c\}$ are not covered by $v$, then edge
$\ell=\{a,b\}$ can be rerouted so that  $\tilde{\ell}$ does not
cross the rerouted edge,
 and there are no new triangle-crossings.
 Similarly, reroute $\{a,c\}$ if $\{b,c\}$ is not covered by $u$ and
 there are no $a$-hooks,   $a$-arrow and  $a$-sickles from the
 viewpoint of $v$.
\end{lemma}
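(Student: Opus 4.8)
The plan is to reroute $\ell=\{a,b\}$ so that $\tilde\ell$ is crossed by none of the triangle-crossing edges of $\Delta$; then each triangle-crossing edge crosses only the two sides $\{a,c\}$ and $\{b,c\}$, so $\Delta$ carries no triangle-crossing, and what remains is to verify that the rerouted embedding is still adjacency-crossing, that no new triangle-crossing appears, and that topological simplicity is kept. I would first pin down the local picture from the hypotheses. As all triangle-crossing edges of $\Delta$ are clockwise, for $k\ge 2$ they form a fan $e_1,\dots,e_k$ at a vertex $u$ outside $\Delta$, each crossing $\{a,c\}$, then $\{b,c\}$, then $\{a,b\}$; for $k=1$ the same holds with $u$ the endpoint of that edge lying outside $\Delta$. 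Because there are no $a$-hooks, $a$-arrows, $a$-sickles and no counterclockwise triangle-crossing edges, the remaining edges of $fan(u)$ meeting $\Delta$ are needles, $c$-hooks, $c$-arrows and $c$-sickles, all crossing $\{a,c\}$ or $\{b,c\}$ but not $\{a,b\}$; and since $\{a,c\}$ and $\{b,c\}$ are not covered by $v$, apart from the triangle-crossing edge ending at $v$ at most one edge of $fan(v)$ meets each of $\{a,c\}$ and $\{b,c\}$. Hence $\{a,b\}$ is crossed exactly by $e_1,\dots,e_k$ (for $k=1$ one must also absorb a possible edge of $fan(v)$ crossing $\{a,b\}$, handled by the same reasoning).

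For the rerouting, let $e$ be the innermost triangle-crossing edge — the one whose crossing with $\{a,b\}$ is nearest $b$ — and let $Z\subseteq\Delta$ be the region bounded by $\{a,b\}$ and the sub-arc of $e$ between its crossings with $\{a,c\}$ and $\{b,c\}$. Since $e$ is a triangle-crossing edge it is crossed only by the sides of $\Delta$, so no $e_i$ crosses it and no edge of $fan(v)$ crosses it; hence the portions of $e_1,\dots,e_k$ running from their crossing with $\{a,b\}$ to their endpoints inside $\Delta$, those endpoints, and the vertex $v$, all lie in $Z$. I would then reroute $\tilde\ell$ from $a$ to $b$ through $Z$, following the part of $\partial Z$ made up of the initial piece of $\{a,c\}$, the sub-arc of $e$, and the final piece of $\{b,c\}$, and staying on the $c$-side of the arcs that join the $e_i$ to their endpoints inside $\Delta$ — in particular never meeting $e$. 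Then $\tilde\ell$ crosses none of $e_1,\dots,e_k$, and the only other edges it need meet are some of the needles, $c$-hooks, $c$-arrows and $c$-sickles of $fan(u)$ crossing $\{a,c\}$ or $\{b,c\}$ near $a$ resp.\ $b$, and at most one edge of $fan(v)$ on each side; for each such edge the remaining crossers are, by the classification and the hypotheses, edges incident to $a$ or $b$ (together with a side of $\Delta$), so the crossers of that edge stay pairwise adjacent after $\tilde\ell$ is inserted — and in the one remaining borderline case one restores this by also shifting the single $fan(v)$ edge slightly, which is the ``rerouted edge'' of the statement.

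It then remains to check that $\tilde\ell$ meets every other edge at most once — along the strip the edges are encountered in a single monotone bundle — and that, since $\tilde\ell$ stays inside $\Delta$ and has the controlled crossing set above, it cannot be one of the three sides of a newly crossed triangle; this yields an adjacency-crossing embedding without new triangle-crossings in which $\tilde\ell$ avoids the triangle-crossing edges of $\Delta$ (and their subsequent reroutings). The statement about $\{a,c\}$ follows from the same argument with $u$ and $v$ interchanged and $\{b,c\}$ playing the role of $\{a,b\}$. I expect the crux to be the rerouting step itself: $\{a,b\}$ cannot be detached from the triangle-crossing edges by any local perturbation, since these edges surround the region adjacent to $\{a,b\}$ on both of its sides, so $\tilde\ell$ must be pushed all the way through the congested region $Z$; proving that a route through $Z$ exists that picks up only admissible crossings and creates no triangle-crossing is the technical heart of the proof, and it is exactly here that the hypotheses — no $a$-hooks, $a$-arrows, $a$-sickles, and $\{a,c\},\{b,c\}$ not covered by $v$ — are indispensable.
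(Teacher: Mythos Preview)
Your high-level route for $\tilde\ell$ agrees with the paper's: from $a$ run alongside $\{a,c\}$, then follow a chosen edge $f$ of $fan(u)$ through the interior of $\Delta$ from its crossing with $\{a,c\}$ to its crossing with $\{b,c\}$, and finally alongside $\{b,c\}$ to $b$. (The paper takes for $f$ the first edge---possibly a $c$-arrow, or $\{u,b\}$ if that edge exists---that crosses both $\{a,c\}$ and $\{b,c\}$; you take a triangle-crossing edge.) Both arguments then reduce to checking that the edges $\tilde\ell$ picks up along the $\{a,c\}$- and $\{b,c\}$-segments are compatible with $\tilde\ell=\{a,b\}$ as a new crosser.

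The genuine gap is your assertion that for each such edge $n$ ``the remaining crossers are\ldots\ edges incident to $a$ or $b$.'' A needle $n=\{u,w\}$ crosses $\{a,c\}$ and hence may be covered by $a$ \emph{or by $c$}; nothing in the hypotheses (no $a$-hooks/arrows/sickles, $\{a,c\}$ and $\{b,c\}$ not covered by $v$) excludes the second possibility. If $n$ is covered by $c$, it is already crossed by some edge $\{c,z\}$, and once $\tilde\ell=\{a,b\}$ also crosses $n$ the two crossers $\{c,z\}$ and $\{a,b\}$ are independent---an illegal configuration. This is precisely the obstruction the paper isolates (Fig.~\ref{fig:badneedle}), and it is not the ``borderline case'' you describe involving a single $fan(v)$ edge: the conflict comes from $fan(c)$, not $fan(v)$, and there may be arbitrarily many such needles. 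The paper's remedy is a \emph{preliminary} rerouting: before moving $\ell$, every needle covered by $c$ is first rerouted along $f$ so that it reaches $\{a,c\}$ only beyond $p_1$ and thereby loses all its $fan(c)$ crossings; only after this cleanup does the rerouting of $\ell$ go through. Without this step your argument does not establish that the resulting embedding is adjacency-crossing.
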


\begin{proof}
Besides one or more clockwise triangle-crossing edges there are only
needles, $c$-hooks, $c$-arrows and $c$-sickles. We cannot route the
triangle-crossing edges along the edges of $\Delta$, since vertices
$a$ and $b$ may be incident to ``fat edges'', that are explained in
Section \ref{sect:fanplanar}, and prevent a bypass. Therefore, we
reroute $\{a,b\}$. Similarly, we reroute $\{a,c\}$ if $\{a,b\}$ and
$\{b,c\}$ are not covered by $u$, and both ways may be possible.

If $\{u,b\}$ is an edge of $G$, then it crosses $\{a,c\}$ and we
take $f=\{u,b\}$; 
otherwise let $f$ be the first edge crossing both
$\{a,c\}$ in $p_1$ and $\{b,c\}$ in $p_2$. Then $f$ is covered by
$c$ and  is a triangle-crossing edge or a $c$-arrow. There is a
segment from $u$ to $p_1$, from $p_1$ to $p_2$, and from $p_2$ to
$b$. Other edges incident to $c$ cannot cross $f$, since $f$ is
triangle-crossing or is protected from $c$ by a triangle-crossing
edge, and the final part along $\{b,c\}$ is uncrossed, because $f$
is the first edge crossing $\{b,c\}$ from $b$.

Reroute  $\ell = \{a,b\}$  so that $\tilde{\ell}$ first follows
$\{a,c\}$ from $a$ to $p_1$, then $f$ to $p_2$ and finally $\{b,c\}$
to $b$. If $f= \{u,b\}$, then $p_2$ and $b$ coincide. 
Let $N$ be the set of edges crossing $\{a,c\}$ in the segment from
$a$ to $p_1$. Then $N$ consists of needles so that $N = N_c \cup
N_a$, where a needle $n \in N_c$ is covered by $c$ and a needle $n
\in N_a$ is uncovered or covered by $a$. The needles in $N_c$ cross
$\{a,c\}$ before the needles of $N_a$. In fact, if an edge $\{x,y\}$
other than $\{a,c\}$ crosses a needle $n \in N$, then $\{x,y\}$ is
outside $\Delta$ if $n \in N_c$. If $\{x,y\}$ crosses $n$ inside
$\Delta$, then  $n \in N_a$, since further edges incident to $c$
cannot enter the interior of $\Delta$ below the triangle-crossing
edges.

Now $\tilde{\ell}$ is crossed by the edges of $N$. Note that there
are no crossings of $\tilde{\ell}$ in the second part along $f$ and
in the third part along $\{b,c\}$.  Since the edges of $N$ are
incident to $a$, $\tilde{\ell}$ is crossed by edges $fan(a)$. In
return, consider an edge $h$ crossing some needle $n = \{u,w\} \in
N$. Then $n$ and may be covered by $a$ or by $c$ so that $h=
\{a,d\}$ or $h=  \{d,d\}$. If $h$ is not covered by $c$, we are
done, since we can add $\tilde{\ell} = \{a,b\}$ to the fan of edges
of $fan(a)$ crossing $n$.

However, there is a conflict if $n$ is covered by $c$, as shown in
Fig.~\ref{fig:badneedle}.
Then there are needles $\{u,w_1\}, \ldots, \{u,w_s\}$ and edges
$\{c, z_1\},\ldots, \{c,z_t\}$ for some $s,t \geq 1 $ so that each
$\{u,w_i\}$ is crossed by some  $\{c, z_j\}$.

We resolve the conflict by  rerouting the  needles  in advance, so
that needles of $N_c$ are no longer covered by $c$, see
Fig.~\ref{fig:badneedleR}. Reroute each needle $\tilde{n}$   from
$u$ to $p_1$ along $f$, then along $\{a,c\}$, and finally along $n$.
Then there is a segment from $u$ to the crossing point with
$\{a,c\}$ so that $\tilde{n}$ is only crossed by a subset of edges
that cross $g$. Thereafter, there are no needles covered by $c$, and
we are done.
\qed
\end{proof}

\begin{figure}
  \centering
  \subfigure[ ]{
    \parbox[b]{4.5cm}{%
      \centering
      \includegraphics[scale=0.6]{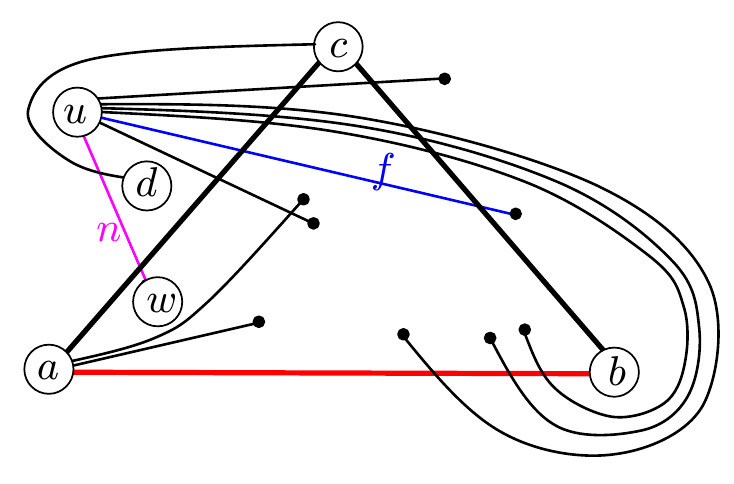}
    }
    \label{fig:badneedle}
  }
  \hfil
  \subfigure[ ]{
    \parbox[b]{4.5cm}{%
      \centering
      \includegraphics[scale=0.6]{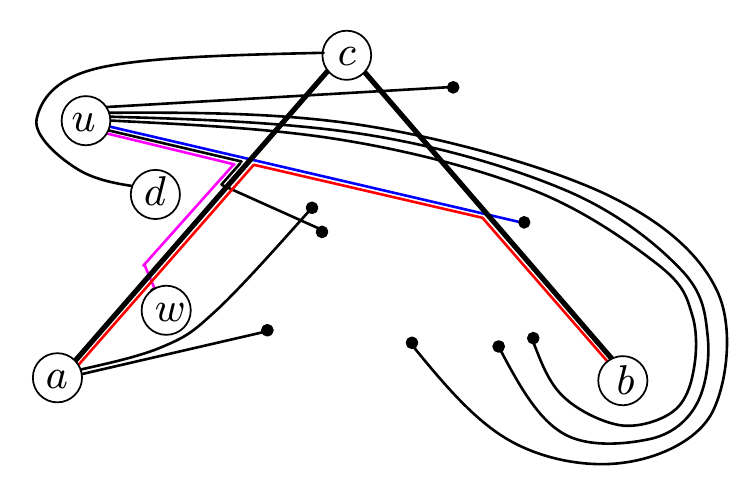}
    }
    \label{fig:badneedleR}
  }
 \caption{A triangle-crossing (a) with a needle covered by vertex
 $c$ that introduces configuration II
 and an edge rerouting that avoids triangle-crossing edges.}
  \label{fig:routebase}
\end{figure}

We can now show that triangle-crossings can be avoided.

\begin{theorem} \label{thm:trianglecrossing}
Every adjacency-crossing graph is fan-crossing.
\end{theorem}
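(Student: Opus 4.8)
The plan is to transform an arbitrary adjacency-crossing embedding $\mathcal{E}(G)$ into an adjacency-crossing embedding of $G$ that contains no triangle-crossing at all. This suffices: an adjacency-crossing embedding has no independent crossings, and, as observed earlier, a triangle-crossing is the only configuration in which an edge is crossed by edges that neither form a fan nor are independent; hence an adjacency-crossing embedding without triangle-crossings is a fan-crossing drawing, witnessing that $G$ is fan-crossing.

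So it remains to remove all triangle-crossings by edge rerouting. I would process the crossed triangles one at a time. Fix a crossed triangle $\Delta = (a,b,c)$ and bring its triangle-crossing into the standard configuration, with the common vertex $u$ of the triangle-crossing edges outside $\Delta$. Distinguish three cases. (i) If $\Delta$ is crossed by both a clockwise and a counterclockwise triangle-crossing edge, apply Lemma~\ref{lem:triangle2dir}: every triangle-crossing edge of $\Delta$ is rerouted so that it crosses exactly one edge of $\Delta$, and no new triangle-crossing is created. (ii) Otherwise all triangle-crossing edges of $\Delta$ cross clockwise (relabel $\Delta$ if necessary; for a single triangle-crossing edge the direction is merely a labelling choice). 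If there is an $a$-hook, $a$-arrow or $a$-sickle, either from $u$ or, by the symmetric version, from the interior vertex $v$, apply Lemma~\ref{lem:trianglearrow}, which reroutes the triangle-crossing edges along such an edge (taking the detours specified there when that edge is covered by $b$ or $c$) without creating new triangle-crossings. (iii) Otherwise the only edges crossing $\Delta$ are clockwise triangle-crossing edges, needles, $c$-hooks, $c$-arrows and $c$-sickles; apply Lemma~\ref{lem:trianglecovered} to reroute $\{a,b\}$ — after first rerouting the needles covered by $c$, as in that proof — or symmetrically $\{a,c\}$, so that afterwards the edges of $\Delta$ are no longer crossed by any triangle-crossing edge.

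I expect the delicate points to be the exhaustiveness of this case analysis and the termination of the procedure. For exhaustiveness, in case (iii) one must check that the hypotheses of Lemma~\ref{lem:trianglecovered} hold for at least one of the two symmetric choices, i.e.\ that $\{a,c\}$ and $\{b,c\}$ are not covered by $v$ (resp.\ $\{b,c\}$ not covered by $u$); I would derive this from the absence of $a$-hooks, $a$-arrows and $a$-sickles on both sides together with the fact that the clockwise triangle-crossing edges shield the edges at $b$ (and at the interior vertices) from entering the sectors that would be needed to cover those edges, using the $3$-fold symmetry of $\Delta$. For termination, each of the three lemmas strictly decreases, say, the total number of triangle-crossing edges of the embedding while introducing no new triangle-crossing; moreover every rerouted edge $\tilde g$ inherits only a subset of the crossings of the original $g$, except for the controlled replacement of a crossing with one edge of $\Delta$ by a crossing with another edge of $\Delta$ at hooks, so a triangle that was not crossed before cannot become crossed and an already-processed triangle is not disturbed. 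Hence after finitely many reroutings no crossed triangle remains, and the resulting adjacency-crossing embedding is a fan-crossing drawing of $G$. \qed
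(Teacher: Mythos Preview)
Your proposal is correct and follows essentially the same three-case strategy as the paper: process the crossed triangles and apply Lemma~\ref{lem:triangle2dir}, Lemma~\ref{lem:trianglearrow}, or Lemma~\ref{lem:trianglecovered} according to whether there are triangle-crossing edges in both directions, $a$-hooks/$a$-arrows/$a$-sickles, or neither, using that none of these reroutings creates a new triangle-crossing. Your discussion of exhaustiveness (that at least one of the two symmetric hypotheses of Lemma~\ref{lem:trianglecovered} is always met in case~(iii)) and of termination is in fact more explicit than the paper's own proof, which simply asserts both points in a single sentence.
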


\begin{proof}
Let $\mathcal{E}(G)$ be an adjacency-crossing embedding of a graph
$G$ and suppose that there are triangle crossings. We   remove them
one after another   and first  consider  all triangles
 with triangle-crossing edges in both directions
(Lemma \ref{lem:triangle2dir}), then the triangles with $a$-hooks,
$a$-arrows or $a$-sickles (Lemma \ref{lem:trianglearrow}), and
finally those without such edges (Lemma \ref{lem:trianglecovered}).
 Each step
removes a crossed triangle and does not introduce new ones. Hence,
the resulting embedding is fan-crossing.
\qed
\end{proof}

\section{Fan-Crossing and Fan-Planar  Graphs} \label{sect:fanplanar}

In this Section  we assume that embeddings are fan-crossing so that
independent crossings and triangle-crossings are excluded.
Fan-planar embeddings also exclude configuration II
\cite{ku-dfang-14}. An instance   of configuration II consists of
the fan-crossing embedding of a subgraph $C$ induced by the vertices
of an edge $e=\{u,v\}$  and of all edges $\{t,w\}$ crossing $e$,
where $e$ is crossed from both sides, as shown in
Fig.~\ref{fig:conf2}. We call $e$ the \emph{base} and its crossing
edges the \emph{fan} of $C$, denoted $fan(C)$. Since $e$ is crossed
from both sides, it it crossed at least twice, and therefore  it is
covered by $t$.
 It may be crossed by more than two edges. Hence, an edge is the base of
 at most one configuration, but a base
 may be in the fan of another configuration.
Each edge $g$   of $fan(C)$ is uncovered or is covered by exactly
one of $u$ and $v$. It may cross several base edges so that it is
part of several configurations.
%
An edge of $fan(C)$ is said to be \emph{straight} if it crosses $e$
from the left and \emph{curved} if it crosses $e$ from the right.
Then an instance of configuration II has at least a straight and a
curved edge. Moreover, exactly one of $u$ and $v$ is inside a cycle
with edge segments of a curved edge, the base, and a straight edge.
For convenience, we assume that $u$ is inside the
  cycle and curved edges are \emph{left  curves}. Right curves enclose $v$ and both left and right curves
  are possible. However, if there are left and right curves, then curves in one direction can
  be rerouted.

For convenience, we augment  the embedding and assume that for every
instance $C$ of configuration II there are edges $\{t,u\}$ and
$\{t,v\}$. If these edges do not exist,  they can be added.
Therefore, route $\{u,t\}$
 along the first left  curve $f$    from $u$ to the
  first crossing point with an edge $g$ of $fan(u)$  and then
along $g$. Then $f$ is uncovered or covered by $u$ and $\{t,u\}$ is
uncrossed, or $f$ is covered by $v$ and $\{t,u\}$ is covered by $v$
or is uncovered. Accordingly, $\{t,v\}$ follows the rightmost edge
crossing $e$ and the first crossed edge of $fan(v)$. The case with
right curves is similar. Hence, we can assume that there is a
triangle $\Delta = (t,u,v)$ associated with $C$.


There are some  cases in which configuration II can be avoided by an
edge rerouting.  A special one has been used in Lemma
\ref{lem:trianglecovered} in which the straight edge is crossed by a
triangle-crossing edge. However, there is a case in which
configuration II is unavoidable.

\begin{lemma} \label{lem:conf2covercurve}
If  a straight edge $s$ of an instance $C$ of configuration II is
uncovered or is covered by $u$, then the left curves $g$ to the left
of $s$  can be rerouted so that $\tilde{g}$ does not cross the base.
  The edge rerouting does not introduce new instances of
configuration II.
 \end{lemma}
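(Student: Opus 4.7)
The plan is to reroute each left curve $g = \{t,w\}$ lying to the left of $s$ by detouring it around the endpoint $u$ of the base, rather than letting it cross $e$. This detour is topologically available because, by the very definition of a left curve, the closed cycle through $t$ formed by segments of $e$, $s$, and $g$ encloses $u$; so within the enclosed region one can travel from $t$ to a point near the vertex $u$ without crossing $e$, and continuing around $u$ (which, as an endpoint of $e$, can be passed without crossing $e$) reaches the opposite side of $e$ where $w$ lies. The hypothesis that $s$ is uncovered or covered by $u$, i.e.\ not covered by $v$, ensures that this region on the $u$-side of $s$ is ``crossings-safe'': any edge crossing $s$ there belongs to $fan(t)$ or $fan(u)$.

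Concretely, I would order the left curves to the left of $s$ as $g_1,\ldots,g_k$ by the positions of their crossings with $e$, starting with the one closest to the crossing $p_s$ of $s$ with $e$ and moving toward $u$, and process them one at a time from the outside in, so that the rerouted curves nest and do not cross one another. For each $g_i=\{t,w_i\}$ I would define $\tilde g_i$ to travel from $t$ parallel to $s$ on its $u$-side until it reaches the segment of $e$ between $u$ and $p_s$, then detour around the vertex $u$, and finally follow the original route of $g_i$ from just past the former crossing of $g_i$ with $e$ to $w_i$. By construction $\tilde g_i$ does not cross $e$.

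Next I would verify that the resulting embedding is still simple and fan-crossing. Along the initial portion parallel to $s$, $\tilde g_i$ can only be crossed by edges that already crossed $s$; by the hypothesis, every such edge is in $fan(t)$ or $fan(u)$, so these crossings form a fan. Along the detour near $u$, any crossed edge must be incident to $u$, again giving a fan-crossing. The final portion coincides with $g_i$ and therefore inherits only a subset of the crossings of $g_i$, which were already fan-crossings. Topological simplicity is preserved because the routes $\tilde g_1,\ldots,\tilde g_k$ nest in the order they are processed.

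The main obstacle, as I see it, is showing that no new instance of configuration II is introduced. Since $\tilde g_i$ no longer crosses $e$, the base $e$ is freed from $\tilde g_i$ and the particular instance $C$ is weakened. For any other edge $e'$ that $\tilde g_i$ crosses, I would argue that $\tilde g_i$ approaches $e'$ from the same side as either $s$ or the original $g_i$, so it cannot pair with an oppositely-routed partner to create a new configuration II; here the hypothesis that $s$ is not covered by $v$ is critical, because it prevents edges of $fan(v)$ from intruding into the region between $s$ and $u$ through which $\tilde g_i$ is routed. A careful case analysis along the three portions of $\tilde g_i$'s route, analogous to those in Lemmas~\ref{lem:triangle2dir}--\ref{lem:trianglecovered}, should close the argument.
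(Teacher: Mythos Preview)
Your rerouting differs from the paper's in a way that creates a genuine gap. The paper does \emph{not} send $\tilde g$ around the vertex $u$. Instead it routes $\tilde g$ along $s$ from $t$ only as far as the crossing point with the \emph{first} edge $f\in fan(u)$ that crosses both $s$ and $g$, then along $f$ to $g$, and finally along $g$. When $g$ is a left curve this $f$ lies strictly before the base $e$, so $\tilde g$ never reaches $e$ and never passes near $u$; every edge that $\tilde g$ meets is an edge of $fan(u)$ that already crossed $s$ or already crossed $g$, and hence is already known to be in $fan(t)$ from the crossed edge's point of view.

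Your detour around $u$ is where the argument breaks. You correctly note that every edge met on that detour is incident to $u$, so from $\tilde g$'s perspective the crossings form a fan. But fan-crossing must also hold from the perspective of each crossed edge. An edge $\{u,z\}$ in the sector you traverse around $u$ need not have been crossed by $s$ or by $g$ before; it may be uncovered, or covered by some vertex $p\notin\{t,w\}$. Once $\tilde g=\{t,w\}$ crosses it, $\{u,z\}$ is crossed by an edge of $fan(p)$ and an edge of $fan(t)$ simultaneously, which is an independent crossing. Nothing in the hypothesis ``$s$ is uncovered or covered by $u$'' controls the edges incident to $u$ in that sector, so your claim ``any crossed edge must be incident to $u$, again giving a fan-crossing'' does not close the case. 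The paper's bridge-edge trick avoids exactly this problem: by following an existing $f\in fan(u)$, $\tilde g$ only touches edges that $s$ or $g$ already touched, so the reverse-direction fan property is inherited rather than something new to be established.
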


\begin{proof}
We reroute each edge $g$ to the left of $s$ so that $\tilde{g}$
first follows $s$ from $t$ to the crossing point with the first edge
$f$  of $fan(u)$ that crosses both $g$ and $s$. Then $\tilde{g}$
follows $f$ and finally $g$. If $g$ is  a straight edge, then $f =
\{u,v\}$, which is crossed.
 See
Fig.~\ref{fig:conf2leftsR} for an illustration. If $g$ is a left
curve, then $\tilde{g}$ is only crossed by the edges of $fan(u)$
that cross $s$ in the sector between $\{u,t\}$ and $f$,  and by the
edges that cross $g$ in the sector from $f$ to the endpoint. All
edges are in $fan(u)$ and $\{u,v\}$ is not crossed by $\tilde{g}$.
Each edge $h$ that is crossed by $\tilde{g}$ is crossed only once,
since $f$ is the first edge crossing $g$ and $s$. If $h \in fan(u)$
is crossed by $\tilde{g}$ and $g$ and $h$ do not cross, then $h$
crosses $s$ and $h$ is a straight edge for $\tilde{g}$. If there is
a curved edge $\{u,w\}$ crossing $\tilde{g}$, then $\{u,w\}$ is also
a curved edge for $s$. Hence, $\tilde{g}$ can be added to that
instance of configuration II.
 If $g$ is a straight edge,
then $\tilde{g}$ is crossed by a subset of edges that cross $g$,
since each edge of $fan(u)$ crossing $s$ in the sector between
$\{u,t\}$ and $\{u,v\}$ must cross $g$. Hence there are no more edge
crossings and instances of configuration II.
\qed
\end{proof}

\begin{figure}[h]
  \centering
  \subfigure[ ]{
    \parbox[b]{6.0cm}{%
      \centering
      \includegraphics[scale=0.7]{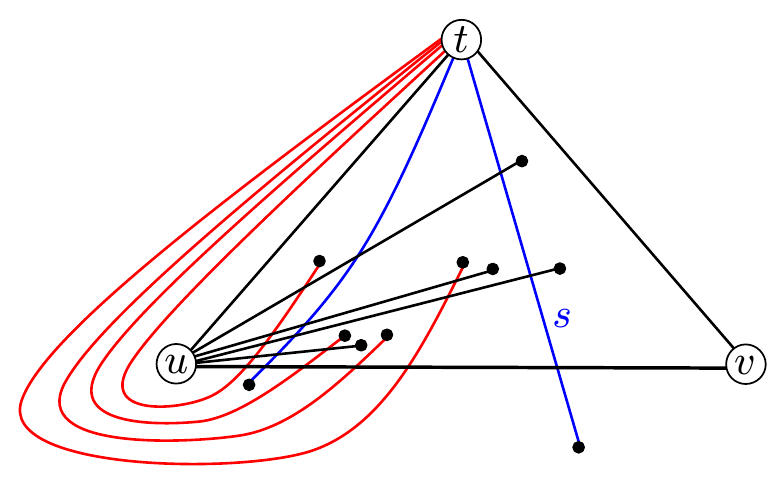}
    }
    \label{fig:conf2left}
  }
  \hfil
  \subfigure[ ]{
    \parbox[b]{5.5cm}{%
      \centering
      \includegraphics[scale=0.7]{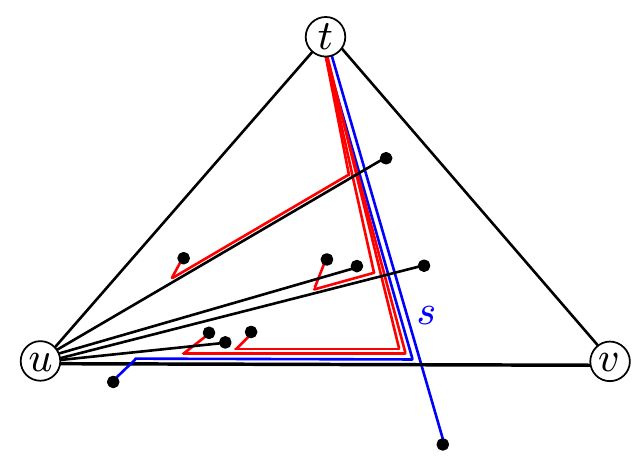}
    }
    \label{fig:conf2leftR}
  }
 \caption{  An instance of configuration II with (a) a straight edge $s$  covered by $u$
 and left curves to its left and   (b) rerouting the edges crossing
 $\{u,v\}$ to the left of $s$. }
  \label{fig:conf2leftsR}
\end{figure}

In consequence, we can remove instances of configuration II in which
there are left curves, right curves and straight edges, since Lemma
\ref{lem:conf2covercurve} either applies to the left or to the right
curves. Lemma \ref{lem:conf2covercurve} cannot be used if left
curves are to the right of straight edges, since the left curves may
be covered by $v$ and the straight edges by $u$. Then configuration
II may be unavoidable using a construction similar to the one of
Theorem \ref{thm:notfanplanar}.\\

A left curve $g= \{t,x\}$ is \emph{semi-covered} by $u$ if it is
only crossed by an
 edge $\{u,w\}$  in the sector between $\{u,t\}$ and $\{u,v\}$.
 Thus the crossing edge is inside the triangle $\Delta = (t,u,v)$.
   Accordingly, a straight edge $h= \{t,y\}$ is \emph{semi-covered} by $v$
if each edge $\{v,w\}$ with $w\neq u$ crosses  $h$ in the sector
between $\{v,t\}$ and $\{v,u\}$, i.e., outside $\Delta$.

A semi-covered edge is covered, but not conversely. A covered left
curve that is not semi-covered is crossed by edges of $fan(u)$ in
the sector between $\{t,v\}$ and $\{t,u\}$ in clockwise order, i.e.,
outside the triangle $(t,u,v)$. Similarly, a semi-covered straight
edge may be crossed by edges of $fan(v)$ inside the triangle. Thus a
semi-covered left curve consists of a segment from $u$ to the
crossing with $\{u,v\}$ and a semi-covered straight edge is
uncrossed inside $\Delta$. These segments are good for routing other
edges.

\begin{lemma} \label{lem:oonf2semicovered}
If there is a semi-covered straight (curved) edge, then all  curved
(straight) edges
  can be rerouted such that they do not cross the base, so that
  configuration II is avoided.
\end{lemma}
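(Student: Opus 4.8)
The plan is to exploit the structural dichotomy already established for semi-covered edges, namely that a semi-covered straight edge $h$ is uncrossed inside $\Delta = (t,u,v)$ while a semi-covered left curve $g$ consists of a clean segment from $u$ to its crossing point with $\{u,v\}$. I would treat the two cases symmetrically and present only the first in detail: assume there is a semi-covered straight edge $h = \{t,y\}$, and reroute every curved edge so that it no longer crosses the base $e = \{u,v\}$. The idea is the same rerouting motif used throughout Section~\ref{sect:trianglecrossings} and in Lemma~\ref{lem:conf2covercurve}: bundle the edges to be rerouted along a ``clean'' segment, follow it to a controlled detour edge, then rejoin the original route.

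Concretely, let $g = \{t,x\}$ be a curved edge (crossing $e$ from the right, so $v$ is enclosed). I would reroute $\tilde g$ to first follow $h$ from $t$ until $h$ reaches the interior of $\Delta$, then cross into $\Delta$ and follow (a segment of) $\{u,v\}$ or $\{v,t\}$ toward the crossing point of $g$ with the base, and finally follow the original $g$ from that point to $x$. Since $h$ is semi-covered, it is uncrossed inside $\Delta$, so the initial portion of $\tilde g$ inside $\Delta$ is crossing-free; the portion along the base or $\{v,t\}$ introduces at most crossings with edges of $fan(v)$ that already cross $g$ in that sector; and the final portion along $g$ inherits exactly the crossings of the tail of $g$. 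Ordering the curved edges by their crossing points on $e$ and rerouting them as a well-ordered bundle along $h$ keeps the drawing topologically simple and ensures no two rerouted edges cross more than once. The key points to verify are: (i) $\tilde g$ no longer crosses $e$, which is immediate by construction since $\tilde g$ reaches $x$ without re-entering the base; (ii) every edge crossing $\tilde g$ is adjacent to the edges it crosses — the new crossings all lie in $fan(v)$ together with the edges $g$ already met, so fan-crossing is preserved and no independent crossing appears; and (iii) no new instance of configuration~II is created, because $\tilde g$ picks up only crossings that were already present on $g$ or on $h$ inside the relevant sector, and $e$ loses crossings rather than gaining them, so its configuration can only be destroyed, never created — and any other edge playing the role of a base is untouched.

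The symmetric case, where a semi-covered curved edge $g = \{t,x\}$ is present and we reroute all straight edges, runs the same way using that $g$ has a clean segment from $u$ to $\{u,v\}$: each straight edge $\tilde h$ first follows $g$ from $t$ to its crossing with $\{u,v\}$, then runs along $\{u,v\}$ (or $\{u,t\}$) to the old base-crossing point of $h$, then along $h$. The clean segment of $g$ guarantees the first portion is uncrossed, and the rest of the analysis is as before with the roles of $u$ and $v$, straight and curved, interchanged. After either rerouting, every edge that used to cross $e$ from the enclosing side has been removed from the base, so $e$ is crossed from only one side and configuration~II at $e$ is gone.

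The main obstacle I anticipate is case (ii) above: showing that the detour segment along the base or along $\{v,t\}$ does not create an \emph{independent} crossing. An edge $\{v,w\}$ or $\{u,w\}$ that crosses this segment must be shown to already be adjacent — in $fan(v)$ — to the other edges crossing $\tilde g$ there, and this is exactly where semi-coveredness is essential: it forces all $fan(v)$-edges crossing a semi-covered straight edge $h$ to lie inside $\Delta$ (respectively, all $fan(u)$-edges crossing a semi-covered curve to lie inside $\Delta$), pinning down which sector the interfering edges occupy and ruling out an edge incident to the ``wrong'' vertex reaching across. Handling the boundary sub-case where the detour must instead run along $\{v,t\}$ rather than $\{u,v\}$ (when $g$'s base-crossing point is very close to $v$) requires the augmenting assumption that the triangle $\Delta = (t,u,v)$ exists, which is available by the discussion preceding Lemma~\ref{lem:conf2covercurve}. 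Once these adjacency bookkeeping checks are done, topological simplicity and the ``no new configuration~II'' claim follow from the now-standard segment argument.
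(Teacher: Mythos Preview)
Your high-level scheme is exactly the paper's: follow the semi-covered edge $f$ from $t$ down to the base, turn along the base, and then rejoin the original edge. The paper's proof is only three sentences and uses precisely this motif, invoking Lemmas~\ref{lem:triangle2dir} and~\ref{lem:trianglearrow}.

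There is, however, a real gap in your execution. You reroute \emph{only} the curved edges and leave the straight edges in place. But every straight edge $h'\in fan(t)$ still pierces the base at its old point $p_{h'}$, and if $p_{h'}$ lies between the crossing point of the semi-covered edge $h$ and the crossing point $p_g$ of some curved edge $g$, then your $\tilde g$ running along the base from $p_h$ toward $p_g$ is trapped in the region bounded by $h$, $h'$, and the base. To reach $p_g$ it must cross one of these, and crossing $h'$ is a crossing between two edges sharing the endpoint $t$, which destroys topological simplicity. The paper avoids this by rerouting \emph{all} edges of $fan(C)$---straight and curved---in one ordered bundle along $f$; since the base is covered by $t$, nothing outside the bundle crosses it, and inside the bundle the edges are ordered so as not to cross one another. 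That is why the paper can assert that ``the part of $\tilde g$ is uncrossed until it meets $g$,'' which immediately gives $\tilde g$ only a subset of $g$'s crossings.

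This also explains why your crossing analysis went astray: you write that the portion along the base ``introduces at most crossings with edges of $fan(v)$,'' but edges incident to $v$ are adjacent to the base and cannot cross it; the only edges crossing $\{u,v\}$ are those of $fan(t)$. Once all of those are in the bundle, the detour segment is genuinely uncrossed and the verification of (ii) and (iii) becomes trivial---there are no new crossings at all, hence no independent crossings and no new instances of configuration~II. The side-remark about detouring along $\{v,t\}$ and the ``$v$ is enclosed'' orientation (the paper's convention has left curves enclosing $u$) are symptoms of the same geometric confusion; they disappear once you adopt the bundle-everything approach.
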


\begin{proof}
We proceed as in Lemmas \ref{lem:triangle2dir} and
\ref{lem:trianglearrow} and reroute all straight and curved edges in
a bundle along the semi-covered edge $f$ from $t$ to the base
$\{u,v\}$, where they make a left or right turn, follow the base and
finally their original. If $f$ is straight (curved), then the curved
(straight) edges do not cross the base. Each rerouted edge
$\tilde{g}$ is only crossed by a subset of edges that cross $g$,
since the   part of $\tilde{g}$ is uncrossed until it meets $g$.
\qed
\end{proof}

Next, we construct   graph $M$ in which  configuration II is
unavoidable. Graph $M$ has  fat   and ordinary edges. A \emph{fat
edge} consists of   $K_7$. In fan-crossing graphs, a fat edge plays
the role of an edge in planar graphs. It is impermeable to any other
fat or ordinary edge. This observation is due to Binucci et al.
\cite{bddmpst-fan-15} who proved the following:

\begin{lemma} \label{lem:fatedge}
For every fan-crossing embedding of $K_7$ and every pair of vertices
$u$ and $v$ there is a path of segments in which at least one
endpoint is a crossing point. Thus, each pair of vertices is
connected if the uncrossed edges are removed.
\end{lemma}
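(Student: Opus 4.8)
The plan is to prove something slightly stronger: after deleting all uncrossed edges of the given fan-crossing drawing of $K_7$, the remaining object $H$ --- the plane graph whose vertices are the seven vertices of $K_7$ together with all crossing points, and whose edges are the segments of the crossed edges --- is connected and has every one of the seven vertices as a non-isolated vertex. Once this is established, any two vertices $u,v$ are joined by a path in $H$, that is, by a path made of segments of crossed edges; and any such segment has a crossing point as at least one of its endpoints, since a crossed edge carries at least one crossing point and is thereby cut into segments each of which has a crossing point as an endpoint. This gives the claim, and in particular $u$ and $v$ stay connected when the uncrossed edges are removed.

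Two things then remain. First, $H$ is connected. Let $D$ be a connected component of $H$. Every edge of $H$ is a segment of a crossed edge, so $D$ contains a crossing point $p$, at which two crossed edges $e$ and $f$ meet; both lie in $D$, and since the drawing is topologically simple $e$ and $f$ are non-adjacent, so $D$ contains four distinct vertices of $K_7$. Moreover distinct components of $H$ have disjoint vertex sets: if a vertex $x$ carried crossed edges belonging to two components, all crossed edges at $x$ would meet at $x$ and hence lie in a single component. As $4+4>7$ and $K_7$ is non-planar (so $H$ is non-empty), $H$ has exactly one component.

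Second, every vertex of $K_7$ is incident to a crossed edge. Suppose not, say $w$ is incident only to uncrossed edges. Deleting $w$ and its six edges leaves a fan-crossing drawing of $K_6$ in which the six neighbours $v_1,\dots,v_6$ of $w$ all lie on the boundary of the single face vacated by $w$: each $v_i$ is joined to the former position of $w$ by an arc crossing nothing, so these arcs, and $w$'s position, lie in one face of the $K_6$-drawing, which therefore has $v_1,\dots,v_6$ on its boundary. Re-embedding so that this face is the outer face, we obtain a fan-crossing drawing of $K_6$ in which $v_1,\dots,v_6$ occur in this cyclic order on the boundary of a disk containing all the edges. But then the ends of $\{v_1,v_4\}$ interleave with those of $\{v_2,v_5\}$ and with those of $\{v_3,v_6\}$ on the boundary circle, so by the Jordan curve theorem $\{v_1,v_4\}$ is crossed by each of $\{v_2,v_5\}$ and $\{v_3,v_6\}$; these two are independent, an independent crossing, contradicting fan-crossing. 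Hence every vertex is incident to a crossed edge, so it lies in the unique component of $H$, and $H$ is connected and spans all seven vertices.

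I expect the second fact to be the crux: counting is not enough --- one only gets that at least $21-(3\cdot 7-6)=6$ edges are crossed and that the uncrossed edges form a plane graph --- and the real input is the topological fact that $K_6$ has no fan-crossing drawing with all six vertices on one face. The easy directions --- the at-least-four-vertices-per-component bound and the observation that every segment of a crossed edge abuts a crossing point --- are immediate; the remaining care is with routine planarity points, namely that the face abandoned by $w$ is a disk on whose boundary the six neighbours sit in a well-defined cyclic order, and that two chords with interleaving ends drawn inside a disk must cross.
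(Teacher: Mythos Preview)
The paper does not supply its own proof of this lemma; it attributes the statement to Binucci et al.\ \cite{bddmpst-fan-15} and quotes it without argument. So there is nothing in the paper to compare your proof against directly.

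Your argument is correct and self-contained. The two-step decomposition is clean: (i) every component of the crossed-edge skeleton $H$ contains a crossing point, hence two non-adjacent crossed edges, hence at least four of the seven original vertices, so $4+4>7$ forces at most one non-trivial component; (ii) no vertex $w$ of $K_7$ can have all six incident edges uncrossed, because deleting $w$ would leave a fan-crossing drawing of $K_6$ with all six vertices on a common face, and then the three long diagonals $\{v_1,v_4\},\{v_2,v_5\},\{v_3,v_6\}$ (for the induced cyclic order) must pairwise cross, forcing an independent crossing on $\{v_1,v_4\}$. Step~(ii) is indeed the substantive point, and your interleaving-chords argument settles it.

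One small technical refinement for the issue you flag at the end: you do not actually need the face vacated by $w$ to have a simple boundary. It is cleaner to keep $w$ and its six uncrossed spokes in place, form the Jordan curve $\gamma$ consisting of the spoke $w v_1$, the edge $\{v_1,v_4\}$, and the spoke $v_4 w$, and observe from the cyclic order of the spokes at $w$ that $v_2$ and $v_5$ lie on opposite sides of $\gamma$. Since the edge $\{v_2,v_5\}$ meets neither the spokes (which are uncrossed) nor the vertices $v_1,v_4,w$, it must cross $\{v_1,v_4\}$; similarly for $\{v_3,v_6\}$. This sidesteps any worry about whether the face boundary is a simple closed curve.
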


There are (at least) three fan-crossing embeddings of $K_7$ with
$K_5$ as in Figs.~\ref{fig:allK5}(a-c) and two vertices in the outer
face, see Fig.~\ref{fig:allK7}. The embeddings in
Figs.~\ref{fig:allK5}(d) and \ref{fig:allK5}(e) cannot be extended
to a fan-crossing embedding of $K_7$ by adding two vertices in the
outer face.

\begin{figure}
  \centering
  \subfigure[ ]{
    \parbox[b]{3.3cm}{%
      \centering
      \includegraphics[scale=0.5]{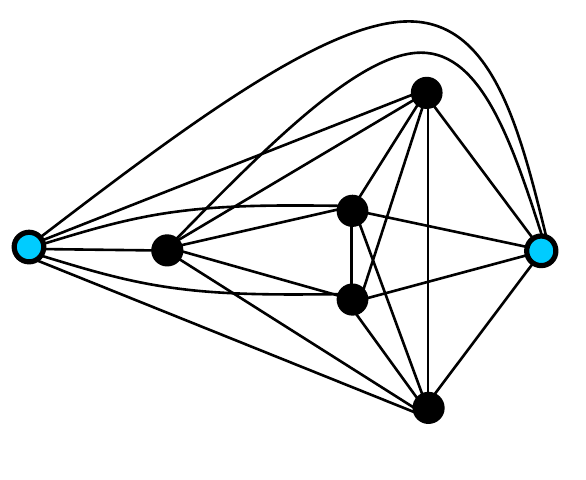}
      }
    \label{fig:K7D1}
  }
  \hfil
  \subfigure[ ]{
    \parbox[b]{3.3cm}{%
      \centering
      \includegraphics[scale=0.5]{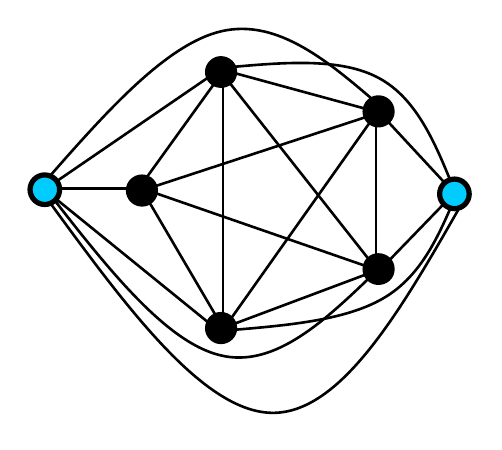}
    }
    \label{fig:K7D2}
  }
  \hfil
  \subfigure[ ]{
    \parbox[b]{3.3cm}{%
      \centering
      \includegraphics[scale=0.5]{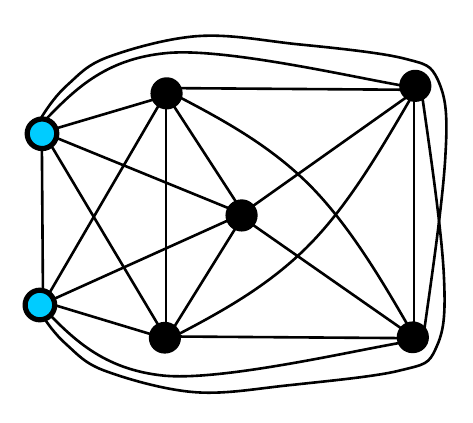}
    }
    \label{fig:K7D3}
  }
 \caption{Different fan-crossing embeddings of $K_7$ that are obtained from different embeddings
 of $K_5$ by adding two vertices in the outer face}
  \label{fig:allK7}
\end{figure}

\begin{theorem} \label{thm:notfanplanar}
There are fan-crossing graphs that are not fan-planar. In other
words, configuration II is unavoidable.
\end{theorem}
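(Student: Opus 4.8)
The plan is to construct an explicit graph $M$ whose every fan-crossing embedding is forced to contain an instance of configuration II that survives all the rerouting lemmas of this section. The skeleton of $M$ will be a single edge $e = \{u,v\}$ that must be crossed, together with two edges $\{t,x\}$ and $\{t,y\}$ that cross $e$ from opposite sides, forming configuration II with the triangle $\Delta = (t,u,v)$. To rule out every alternative routing I would surround the relevant subgraphs by \emph{fat edges}, i.e. copies of $K_7$, exploiting Lemma~\ref{lem:fatedge}: a fat edge is impermeable, so no edge segment can pass through it. Placing fat edges so that $u$ is trapped on one side of a cycle through $t$, $x$, and $y$ and $v$ on the other side will force $e$ to be crossed in exactly the configuration-II pattern, with no room to reroute $e$, $\{t,x\}$, or $\{t,y\}$ to avoid the crossings. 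The shaded regions in Fig.~\ref{fig:conf2} and the graph referenced as Fig.~\ref{fig:graphM} indicate exactly this construction.

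The key steps, in order: (i) describe $M$ concretely --- the base edge $e=\{u,v\}$, the fan edges $\{t,x\}$ and $\{t,y\}$ (a straight edge and a left curve), plus additional fat edges enforcing the cyclic adjacency structure so that in any planar-like backbone the vertices $u$ and $v$ lie in different faces of the cycle formed by segments of $e$, $\{t,x\}$, $\{t,y\}$; (ii) verify that $M$ \emph{is} fan-crossing by exhibiting one embedding, built from the $K_7$ embeddings of Fig.~\ref{fig:allK7} glued appropriately, in which every crossing is a fan-crossing --- this is routine but must be shown; (iii) argue that in \emph{any} fan-crossing embedding of $M$, the fat edges force $e$ to be crossed and force the crossing edges to emanate from a single vertex $t$ on opposite sides of $e$, so configuration II is present; (iv) check that none of Lemmas~\ref{lem:conf2covercurve} or~\ref{lem:oonf2semicovered} applies: the straight edge must be covered by $u$ (not uncovered), the left curves must lie to the right of it and be covered by $v$, and no straight or curved edge is semi-covered --- again enforced by fat edges blocking the segments those lemmas would need for rerouting.

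I would then conclude that $M$ is fan-crossing but has no fan-planar embedding, since every fan-crossing embedding contains the unavoidable configuration II; hence configuration II is essential and fan-crossing $\neq$ fan-planar.

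The main obstacle will be step~(iii)--(iv): making the impermeability argument airtight. Lemma~\ref{lem:fatedge} only guarantees connectivity of the crossing-point graph of a $K_7$, so I must carefully argue that a fat edge genuinely separates the plane for the purposes of routing the ordinary edges of $M$ --- that an ordinary edge cannot ``tunnel'' through a fat edge without creating an independent crossing or a forbidden configuration, and that the fat edges can be simultaneously placed so that all escape routes for $e$, for the two fan edges, and for the rerouting constructions of the preceding lemmas are blocked at once. Getting the combinatorial layout of fat edges right, so that exactly one of $u,v$ is enclosed and no semi-covered or uncovered straight edge can exist, is the delicate part; the rest is a matter of drawing the figure and checking cases against Table~1-style bookkeeping of which vertex covers which edge.
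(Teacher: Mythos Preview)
Your plan is essentially the paper's own proof: construct a graph $M$ in which fat edges (copies of $K_7$, via Lemma~\ref{lem:fatedge}) rigidify the structure so that, up to the internal embeddings of the $K_7$'s, every fan-crossing embedding of $M$ forces the base $\{u,v\}$ to be crossed by $\{t,x\}$ and $\{t,y\}$ from opposite sides. One remark on the logical structure: your step~(iv) is superfluous once step~(iii) is done for \emph{every} fan-crossing embedding --- the rerouting lemmas merely produce other fan-crossing embeddings, which by~(iii) again contain configuration~II, so checking their hypotheses separately adds nothing and can be dropped.
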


\begin{proof}
Consider   graph $M$ from Fig.~\ref{fig:graphM} with  fat edges
representing $K_7$  and ordinary ones. Up to the embedding of the
fat edges, graph $M$ has a unique fan-crossing embedding. This is
due to the following fact.

There is a fixed outer frame consisting of two 5-cycles with
vertices $U = \{t', v', y',a',b', t,v,y,a,b\}$ and fat edges. If fat
edges are contracted to edges or regarded as such, this subgraph is
planar and 3-connected and as such has a unique planar embedding. By
a similar reasoning,   $M[U]$ has  a fixed fan-crossing embedding up
to the embeddings of $K_7$. There are two disjoint 5-cycles, since
fat edges do not admit a penetration by any other edge. Hence, the
edges $\{t,y\}$ and $\{b,v\}$ must be routed inside a face of the
embedding of $M[U]$, and they cross. Consider the subgraph
$M[t,s,u,w,x,z]$ restricted to fat edges. Since vertex $t$ is in the
outer frame, it admits four fan-crossing embeddings with outer face
$(t,u,x,w,z),(t,u,x,z), (t,u,s)$, and $(t,s,z)$, respectively. But
the   edges $\{u,a\}, \{u,b\}, \{v,w\}$ and $\{v,z\}$ exclude the
latter three embeddings, since the edges on the outer cycle are fat
edges and do not admit any penetration by another edge.

Edge $\{u,a\}$ cannot cross $\{t,y\}$, since the latter is crossed
by $\{v,z\}$. Hence, $\{t,y\}$ is crossed by $\{w,v\}$ and
$\{z,v\}$. Finally, edge $\{t,x\}$ must cross $\{u,w\}$. It cannot
cross  $\{v,z\}$ without introducing an independent crossing. Hence,
it must cross $\{u,a\}, \{u,b\}, \{u,v\}$ and $\{u,w\}$.

 Modulo the
embeddings of $K_7$,  every fan-crossing embedding is as shown in
Fig.~\ref{fig:graphM} in which $\{u,v\}$ is crossed by $\{t,x\}$
from the right and by $\{t,y\}$ from the left and thus is
configuration II. Hence, graph $M$ is fan-crossing and not
fan-planar.
\qed
\end{proof}

\begin{figure}
  \centering
    \includegraphics[scale=0.7]{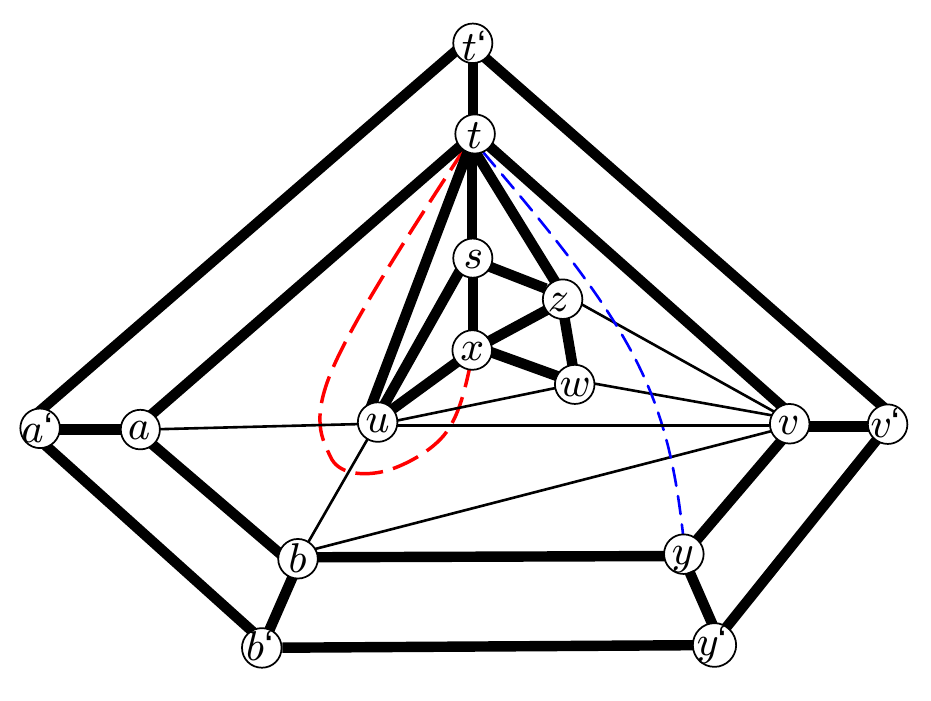}
  \caption{Graph $M$ with fat edges representing $K_7$ and an unavoidable configuration II}
  \label{fig:graphM}
\end{figure}

Theorems \ref{thm:trianglecrossing} and \ref{thm:notfanplanar} solve
a problem of my recent paper on beyond-planar graphs
\cite{b-FOL-17}. Let FAN-PLANAR, FAN-CROSSING, and ADJ-CROSSING
denote the classes of fan-planar, fan-crossing, and
adjacency-crossing graphs. Then Theorems \ref{thm:trianglecrossing}
and \ref{thm:notfanplanar} show:

\begin{corollary}
FAN-PLANAR $\subset$ FAN-CROSSING $=$ ADJ-CROSSING.
\end{corollary}

Kaufmann and Ueckeredt \cite{ku-dfang-14} have shown that fan-planar
graphs of size $n$ have at most $5n-10$ edges, and they posed the
density of fan-crossing  and adjacency-crossing graphs as an open
problem.


\begin{theorem}
For every adjacency-crossing graph $G$ there is a fan-planar graph
$G'$ on the same set of vertices and with the same number of edges.
\end{theorem}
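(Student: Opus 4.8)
The plan is to reduce the statement to the case of fan-crossing embeddings and then to eliminate every instance of configuration~II one at a time, arguing that each elimination step can be carried out without decreasing the number of edges. First I would invoke Theorem~\ref{thm:trianglecrossing}: given an adjacency-crossing graph $G$, take an adjacency-crossing embedding $\mathcal{E}(G)$ and convert it into a fan-crossing embedding, so it suffices to prove the claim for fan-crossing graphs. Thus I may start from a fan-crossing embedding in which independent crossings and triangle-crossings are already excluded, and the only obstacle to fan-planarity is the presence of instances of configuration~II.

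Next I would set up an induction (or a termination argument via a suitable potential function, e.g.\ the number of crossings or the lexicographic pair of (number of configuration~II instances, number of crossings)). Pick an instance $C$ with base $e=\{u,v\}$ and fan $fan(C)$, augmented as in the excerpt so that the triangle $\Delta=(t,u,v)$ is present. The case analysis follows the lemmas already proved. If $C$ has left curves, right curves, and straight edges, Lemma~\ref{lem:conf2covercurve} applies to one of the two curve directions and reroutes those curves off the base. If there is a semi-covered straight or curved edge, Lemma~\ref{lem:oonf2semicovered} reroutes all curves of the opposite type off the base. In each such rerouting no edge is deleted, no new configuration~II instance is created, and the reroutings strictly reduce the potential, so after finitely many steps every remaining instance of configuration~II is of the single unavoidable type — left curves to the right of straight edges, with the left curves covered by $v$ and the straight edges by $u$ — exactly the obstruction witnessed by graph $M$.

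For such a remaining instance the key idea is \emph{not} to reroute an edge but to exchange one: since the base $e=\{u,v\}$ is covered by $t$ and is crossed from both sides, I would delete $e$ and re-insert a \emph{new} edge on the same vertex pair routed along one of the two sectors (say along the curved side, hugging the outermost curved edge from $t$ down to $v$ and then to $u$ inside $\Delta$), so that the replacement edge $\{u,v\}$ is no longer crossed from both sides. The point is that the replacement edge lies in a region bounded by edges all incident to a single vertex, so its crossings form a fan and it does not sit in configuration~II; and the edges that previously crossed $e$ now either cross the new $\{u,v\}$ from one side only or cross the outermost curved edge, again in a fan. The vertex set is unchanged and the edge count is unchanged (one edge removed, one added). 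Because this too strictly decreases the potential (it destroys the instance $C$ and, by the fan-crossing hypothesis plus the absence of triangle-crossings, introduces none elsewhere), the process terminates in a fan-planar embedding of a graph $G'$ on the same vertex set with $|E(G')| = |E(G)|$.

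The main obstacle is the last step: one must verify carefully that re-inserting $\{u,v\}$ along a sector of $\Delta$ does not create a \emph{new} configuration~II elsewhere and does not create an independent crossing — in particular that the new edge is crossed only by edges forming a fan at a single vertex, using that the curves on the chosen side are all incident to $t$ and (after the earlier reduction) semi-coveredness / coveredness is under control. A delicate point is that the base $e$ might itself belong to $fan(C')$ of another configuration~II instance $C'$; one must order the eliminations (e.g.\ always treat a base that is not itself a fan edge of another instance first) so that replacing $e$ does not silently resurrect $C'$ or spoil the induction. Handling this bookkeeping cleanly — choosing the right potential and the right order of elimination so that each step is strictly monotone — is where the real work lies.
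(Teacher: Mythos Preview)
Your reduction to fan-crossing embeddings via Theorem~\ref{thm:trianglecrossing} matches the paper, but the core step diverges and contains a genuine gap. In the ``unavoidable'' case you propose to delete the base $e=\{u,v\}$ and re-insert an edge \emph{on the same vertex pair} $\{u,v\}$, routed through a different sector. But that is nothing other than a rerouting of $e$, and hence produces a fan-crossing embedding of the \emph{same} graph $G$. Theorem~\ref{thm:notfanplanar} shows precisely that this cannot succeed: the graph $M$ has no fan-planar embedding at all, so no rerouting of its edges---in particular no alternative routing of the base---avoids configuration~II. Your termination argument therefore breaks down at exactly the instances you labelled ``unavoidable''.

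The paper's proof resolves this by genuinely changing the graph rather than the embedding. It first augments the embedding with as many fan-crossing edges $\{u,w\},\{v,w\}$ as possible, then identifies vertices $a$ and $b$ lying in faces on either side of the base (the first neighbours of $u$ and of $v$ encountered clockwise from $\{u,v\}$), and proves two facts: (i) the edge $\{a,b\}$ is \emph{not} already present in $G$, by a short case analysis using that $\{u,v\}$ is covered by $t$ and that independent crossings are forbidden; and (ii) after deleting the base $\{u,v\}$, the vertices $a$ and $b$ share a face, so $\{a,b\}$ can be inserted \emph{uncrossed}. The replacement edge, being uncrossed, cannot participate in any configuration~II, which gives clean termination without the potential-function bookkeeping you anticipated. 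The resulting $G'$ differs from $G$ as a graph but has the same vertex set and edge count---which is exactly what the theorem promises, and all it can promise given Theorem~\ref{thm:notfanplanar}. Note also that the paper does not use Lemmas~\ref{lem:conf2covercurve} or~\ref{lem:oonf2semicovered} in this proof; the edge-swap works directly on any instance of configuration~II.
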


\begin{proof}
By Theorem \ref{thm:trianglecrossing} we can restrict ourselves to
fan-crossing graphs. Let $\mathcal{E}(G)$ be a fan-crossing
embedding of $G$
and suppose there is    an instance of configuration II in which the
base $\{u,v\}$ is crossed by $\{t, x\}$ from the right and by
$\{t,y\}$ from the left, or vice-versa.

Augment $\mathcal{E}(G)$ and add edges $\{u,w\}$ if they are
fan-crossing and do not cross both $\{t, x\}$ and $\{t, y\}$, and
similarly, add  $\{v,w\}$.
Consider the cyclic order of edges or neighbors of $u$ and $v$
starting at  $\{u,v\}$ in    clockwise order. Let $a$ and $b$ be the
vertices encountered first. Vertices $a$ and $b$ exist, since $a$
precedes $x$  and $b$ precedes $y$, where $x=a$ or $b=y$ are
possible. Then $a$ and $b$ are both incident to both $u$ and $v$ and
there are two faces $f_1$ and $f_2$ containing a common segment of
$\{u,v\}$ and $a$ and $b$, respectively, on either side of
$\{u,v\}$. Otherwise, further edges can be added that are routed
close to $\{u,v\}$ and are crossed either by edges of $fan(t)$ that
are covered by $u$ or   by $v$.

We claim that there is no edge $\{a,b\}$ in $\mathcal{E}(G)$.
Therefore, observe that the base is covered by $t$, so that
$\{a,b\}$ cannot cross $\{u,v\}$. Note that there is a triangle
crossing if $x=a$ and $b=y$ and $\{u,v\}$ crosses $\{a,b\}$ with a
triangle-crossing edge $\{u,v\}$. Edge $\{a,b\}$  crosses neither
$\{t,x\}$ nor $\{t,y\}$. If $a,b$ are distinct from $x,y$, then
there is an independent crossing of $\{t,x\}$ and $\{t,y\}$,
respectively, by $\{a,b\}$ and $\{u,v\}$. If $a=x$, then $\{t,x\}$
and $\{x,b\}$ are adjacent and do not cross and $\{x,b\}$ and
$\{u,v\}$ independently cross $\{t,y\}$ if $b \neq y$, and for
$b=y$, $\{x,y\}$ and $\{t,y\}$ cannot cross as adjacent edges.

However, after a removal of the base $\{u,v\}$, vertices $a$ and $b$
are in a common face and can be connected by an uncrossed edge $\{a,
b\}$, which clearly cannot be part of another instance of
configuration II.

Hence, we can successively remove all instances of configuration II
and every time replace the base edge by a new uncrossed edge.
\qed
\end{proof}

In consequence, we solve an open problem of Kaufmann and Ueckerdt
\cite{ku-dfang-14} on the density of fan-planar graphs and show that
  configuration II has no impact on the density.

\begin{corollary}
Adjacency-crossing and fan-crossing graphs have at most $5n-10$
edges.
\end{corollary}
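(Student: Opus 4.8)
The plan is to obtain the bound by chaining together two facts that are already available. Let $G$ be an adjacency-crossing graph on $n$ vertices with $m$ edges. First I would apply the theorem just proved, which converts $G$ into a fan-planar graph $G'$ on the very same $n$ vertices and with the same $m$ edges. Next I would quote the density bound of Kaufmann and Ueckerdt~\cite{ku-dfang-14}: a fan-planar graph on $n$ vertices has at most $5n-10$ edges. Applying this to $G'$ and using that $G'$ has $n$ vertices and $m$ edges gives $m \le 5n-10$, which is the claim for adjacency-crossing graphs.

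For fan-crossing graphs no extra argument is needed. By Theorem~\ref{thm:trianglecrossing}, and more precisely by the Corollary stating that the fan-crossing graphs are exactly the adjacency-crossing graphs, the two classes coincide, so the bound just established applies to fan-crossing graphs verbatim. One should keep in mind the standard proviso that the estimate $5n-10$ is stated for $n$ at least a small constant; for the very smallest $n$ it is not tight but remains true, so no case distinction is required.

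I do not expect a genuine obstacle here: the corollary is a direct consequence of the preceding theorem together with a cited result. The only point to be careful about is that the edge-count--preserving reduction of the preceding theorem is invoked for an \emph{arbitrary} adjacency-crossing embedding; this is legitimate because that theorem itself first passes through Theorem~\ref{thm:trianglecrossing} to reach a fan-crossing embedding before eliminating configuration~II. Hence both corollaries follow at once, and the real content of the paper lies in the earlier theorems rather than in this final deduction.
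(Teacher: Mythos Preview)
Your proposal is correct and matches the paper's approach: the corollary is stated without a separate proof, as an immediate consequence of the preceding theorem (which replaces any adjacency-crossing graph by a fan-planar graph on the same vertex set with the same edge count) together with the $5n-10$ bound of Kaufmann and Ueckerdt~\cite{ku-dfang-14}. Your remark that fan-crossing graphs need no extra treatment because FAN-CROSSING $=$ ADJ-CROSSING is likewise exactly how the paper sets things up.
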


\section{Conclusion}  \label{sect:conclusion}
We extended the study of fan-planar graphs initiated by Kaufmann and
Ueckerdt \cite{ku-dfang-14} and continued in \cite{bcghk-rfpg-17,
bddmpst-fan-15} and  clarified the situation around fan-crossings.
We proved that triangle-crossings can be avoided whereas
configuration II is essential for graphs but not for their density.
Thereby, we solved a problem by Kaufmann and Ueckerdt
\cite{ku-dfang-14} on the density of adjacency-crossing graphs.

Recently, progress has been made on problems for 1-planar graphs
\cite{klm-bib-17} that are still open for fan-crossing graphs, such
as (1) sparsest fan-crossing graphs, i.e., maximal  graphs with as
few edges as possible   \cite{begghr-odm1p-13} or (2) recognizing
specialized fan-crossing graphs, such as optimal fan-crossing graphs
with 5n-10 edges \cite{b-ro1plt-16}. 

In addition,
 non-simple topological graphs with multiple
edge crossings and crossings among adjacent edges have been studied
\cite{at-mneqpg-07}, and they may differ from the simple ones, as it
is known for quasi-planar graphs \cite{aapps-qpg-97}. Non-simple
fan-crossing graphs have not yet been studied.

\section{Acknowledgements}
I wish to thank Christian Bachmaier for the discussions on
fan-crossing graphs and his valuable suggestions.

\bibliographystyle{abbrv}

\end{document}